\definecolor{myblue}{HTML}{0072BD}
\definecolor{myorange}{HTML}{D95319}
\definecolor{myyellow}{HTML}{EDB120}
\definecolor{myred}{HTML}{E22B12}
\newtheorem{lemma}{Lemma}
\newtheorem{definition}{Definition}
\newtheorem{problem}{Problem}
\newtheorem{proposition}{Proposition}
\newtheorem{remark}{Remark}
\newtheorem{assumption}{Assumption}
\newtheorem{corollary}{Corollary}
\algnewcommand\algorithmicforeach{\textbf{for each}}
\newcommand{\Nc}{\mathcal{N}}
\newcommand{\Ac}{\mathcal{A}}
\newcommand{\Bc}{\mathcal{B}}
\newcommand{\Dc}{\mathcal{D}}
\newcommand{\Oc}{\mathcal{O}}
\newcommand{\Hc}{\mathcal{H}}
\newcommand{\Jc}{\mathcal{J}}
\newcommand{\Rc}{\mathcal{R}}
\newcommand{\Sc}{\mathcal{S}}
\newcommand{\Zc}{\mathcal{Z}}
\newcommand{\Lc}{\mathcal{L}}
\newcommand{\Tc}{\mathcal{T}}
\newcommand{\Uc}{\mathcal{U}}
\newcommand{\Xc}{\mathcal{X}}
\DeclareMathAlphabet{\mathpzc}{OT1}{pzc}{m}{it}
\newcommand{\Rb}{\mathbb{R}}
\newcommand{\Cb}{\mathbb{C}}
\newcommand{\Eb}{\mathbb{E}}
\newcommand{\Nb}{\mathbb{N}}
\newcommand{\Hf}{\mathfrak{H}}
\newcommand{\ones}{\mathbbm{1}}
\newcommand{\tr}{{\rm tr}}
\newcommand{\diag}{{\rm diag}}
\newcommand{\Span}{{\rm span}}
\DeclareMathOperator{\rank}{rank}
\newcommand{\ket}[1]{\left| #1 \right>}
\newcommand{\bra}[1]{\left< #1 \right|}
\newcommand{\norm}[1]{\left|\left| #1 \right|\right|}
\definecolor{darkviolet}{rgb}{0.58, 0.0, 0.83}
\definecolor{lavender}{rgb}{0.45, 0.31, 0.59}
\newcommand{\vect}{{\rm vec}}
\newcommand{\vv}[1]{{#1}} 
\begin{document}

\title{\LARGE \bf {Reconstructing Quantum States and Expectations\\ via Dynamical Tomography}} 

\author{Marco Peruzzo} 
\affiliation{Department of Information Engineering, University of Padova, Italy.} 
\author{Tommaso Grigoletto} 
\affiliation{Department of Information Engineering, University of Padova, Italy.} 
\author{Francesco Ticozzi}
\affiliation{Department of Information Engineering, University of Padova, Italy.} 
\affiliation{Department of Physics and Astronomy, Dartmouth College, Hanover, New Hampshire 03755, USA}

\begin{abstract}
When the dynamics of a quantum system of interest is known, an informationally-complete set of observables is not needed for state reconstruction via tomographic techniques: letting the system evolve before performing the measurement allows one to effectively extend the available ways to probe the system.
This idea leads to {\em dynamical quantum tomography}, whose feasibility we characterize for general quantum dynamics using Krylov-based methods. Specializing to Markovian ones, we also provide deterministic tests, and randomized ones to effectively assess parametric dynamics. The limits of the methods are explored comparing unitary and open dynamics when a single observable is available, and the set of observables whose expectation can be reconstructed from the available ones characterized. The framework is illustrated with applications to a spin chain (with or without dissipation) and an electron-nuclear system.
\end{abstract}

\date{\today}

\maketitle
\section{Introduction}

Reconstructing unknown properties and states of quantum systems via repeated measurements,  tomographic protocols and statistical methods are arguably among the most fundamental and crucial tasks in every quantum information experimental setup \cite{holevo,paris-estimationbook,quantumverificationreview}. In order to be able to fully reconstruct a state, typically an {\em informationally complete} set of observables is assumed to be available experimentally. However, for large systems, i.e. multipartite system comprising a large number of identical components, the number of the observables increases prohibitively. 

One way to reduce the need for experimental resources is provided by pre-existing or engineered dynamics on the system: in Heisenberg picture, if we let the available observables evolve, they can potentially provide data on new parts of the system's state space. In practice, the quantum system is subjected to a known time evolution for a certain period of time, after which a measurement of a chosen observable is performed. This procedure may be repeated for multiple periods of time and with different observables.
The use of dynamics to enhance the possibility of tomographic protocol has been named {\em Dynamical Quantum (State) Tomography} (DQST) in \cite{dynamicaltomography}, where the first systematic study of feasibility has been presented. 

\textit{State of the art} - While the literature on standard quantum tomography is enormous, the contributions that focus on dynamical tomography are only a few and we quickly review in the following. In the first work \cite{dynamicaltomography}, it is shown that, for almost all discrete-time unitary evolutions, almost all Positive-Operator Valued Measurements (POVM) with $d+1$ outcomes, defined in a  $d$-dimensional Hilbert space, are sufficient to reconstruct the quantum state via DQST. 
When we allow for general discrete-time evolutions for the quantum system, POVMs with only two outcomes may suffice for the state reconstruction. 
We highlight that in their work, the authors also provide bounds on the number of required measurements for DQST when prior knowledge about the state is available.

In \cite{merkel2010random} the authors show that  when $d>2$ reconstruction of every state is not possible from measurements of a single observable evolving in discrete-time under repeated applications of unitary maps. A state estimation protocol based on continuous weak measurements of the time series of operators generated by the evolution of the latter observable is discussed\footnote{Note that this protocol do not exactly coincide with the dynamical tomography protocol discussed above.}. Furthermore, numerical results show how it is still possible to reconstruct both pure and mixed states with high fidelity. The latter possibility is related to the existence of positivity constraints imposed by the set of physically admissible quantum states.

In \cite{HR-MW:2025}, both tomography of states and observables are considered. By exploiting different tools with respect to the previously mentioned work, the authors find the same no-go result of \cite{merkel2010random}. Furthermore, both the cases of discrete-time and continuous-time general Markovian dynamics are considered, and it is proven that for almost all possible dynamics DQST from a single given observable is feasible.

Recently, a connection between DQST and observability analysis in control theory was established \cite{MP-TG-FT:2024, XS-WY:2024}.
In \cite{XS-WY:2024} the authors consider closed quantum systems undergoing known discrete-time unitary dynamics. The no-go result of \cite{merkel2010random} was proven using well-known tools from systems and control theory \cite{kalman1969topics}. Furthermore, a DQST protocol to estimate the state of the system from measurement-data is proposed.

The tomography problem, i.e. state reconstruction from observable averages, is connected - and can be seen in fact as a generalization - of the so-called quantum marginal problem, where the available data are the state marginals with respect to a given multipartite and locality structure \cite{linden_almost_2002,Linden2002_2,weisQuantumMarginalsFaces2023,karuvade2019uniquely,Xin2017}.  
Dynamical tomography can be used in that setting as well, in general extending the set of states that are uniquely reconstructable given their marginals. A preliminary set of results in this sense has been presented in \cite{MP-TG-FT:2024}, where the notion of states that are Uniquely Dynamically Determined among All (UDDA) states has been introduced: {these are states that can be uniquely reconstructed from a set of available measurements performed on identical preparations of such state, also allowing for {\em arbitrary-time evolution} with a known (closed or open) dynamical model.}  Observability analysis in control theory is exploited to assess when all {states} are UDDA, and to provide a reconstruction protocol via linear inversion.

\textit{Contribution of this work} - In this work, we  outline a general framework for DQST, from feasibility analysis to {practical} methods to select the observables and the evolution times. In doing so, we highlight the role and use the tools of {\em observability analysis} in system theory \cite{kalman1969topics}, a Krylov-subspace approach  that provides all the tools needed to assess whether reconstruction problems for quantum systems are solvable. 

More in detail, in this work:

\begin{itemize} 
\item In Section \ref{sec:DQST}, we formalize the connection between the \textit{feasibility of DQST} for a system, the requirement that \textit{every state of a system be UDDA}, and the concept of an \textit{observable system} in the sense of linear system theory \cite{kalman1969topics}. We prove that these three conditions are indeed  equivalent. The needed tools are recalled in Appendix \ref{sec:PBH}. In the literature, similar results were provided only for the cases of Markovian dynamics, while the conditions we give for feasibility of DQST hold for general Completely-Positive, Trace-Preserving (CPTP) evolutions. Later, we specialize the results to time-homogeneous Markovian dynamics, whose additional structure can be exploited to further simplify the test for feasability of  DQST. 

\item For {\em parametric} Markovian dynamics, we show in Section \ref{sec:parametric} that feasibility of DQST is generic if it holds for at least one parameter set. This allows one to sample dynamics with random parameters for testing a system, or looking to find the largest set of reconstructable observable.

\item We revisit the feasibility analysis when the dynamics is unitary, {\em providing a lower bound for the minimal number of observables needed for DQST} given Hamiltonian dynamics in multipartite systems, as well as showing that a generic unitary dynamics is sufficient to ensure DQST when the lower bounds are met. Furthermore, we show that {\em a class of dissipative Lindblad dynamics is able to ensure DQST from a single observable for networks of qubits.} While it has been shown using different, more abstract methods that generic open dynamics suffice \cite{HR-MW:2025}, our analysis exploits properties of the Pauli basis and shows that generic, purely dissipative generators are sufficient. The full proof, being a rather technical detour, is presented in Appendix \ref{sec:nqubitDQST}.

\item {We propose a {\em protocol to select the observables and times involved in the data-acquisition phase necessary for the state reconstruction}. To this aim, in Section \ref{sec:alg} we outline a heuristic procedure that addresses this problem and selects a minimum-cardinality set of observables, together with times (among the available ones) at which measurement should be performed on the system. The obtained sequence aims to maximize the signal-to-noise ratio of the measurements and the quality of the estimate by selecting observables that become the ``most orthogonal'' to the already selected ones. Albeit sub-optimal, the proposed procedure significantly reduces the computational burden by introducing iterative selection, rather than exploring all possible observable sequences, and represents the first attempt to address a problem that was left as open in the literature.}

\item We introduce and analyze an additional, related problem: {\em predicting expectation of target observables that cannot be directly measured} on the system. Section \ref{sec:obs_exp_reconstr} is devoted to introducing the problem and showing that this problem is solvable, without necessarily reconstructing the full state of the system, provided the target observables belong to specific operator subspaces. We highlight again how the tools of observability analysis are key in deriving the solution.

\item Lastly, we consider {\em multipartite quantum systems}, specializing some results to this case and providing some useful bounds on the minimal number of observables needed to obtain DQST for many-body Hamiltonian systems in Section \ref{sec:multi}, and insight on the structure of observable dynamics. Furthermore, we show how the proposed observability-based framework can be effectively used in some relevant examples to estimate the state of the system and observable expectations: we consider a spin chain with different type of Hamiltonian and dissipative dynamics and a bi-partite system that emerges in the modeling of Nitrogen-Vacancy (NV) centers in diamonds. In particular, we are able to prove that the state of networks of 4 or more qubits are never reconstructable from single-site observables with Hamiltonian dynamics, while in principle bipartite quantum systems are, when one considers local  measurements on each subsystem and an generic Hamiltonian. 
\end{itemize}
These findings highlight the beneficial role of dissipation for DQST, as it can allow for multi-qubit DQST from single-site measurements, and even a single measurement, whereas Hamiltonian cannot.

\section{A Dynamical Approach\\ to Quantum State Tomography}\label{sec:DQST}

\subsection{Notations and basic assumptions}\label{sec:notions_and_assumptions}
In this paper, we only consider finite dimensional quantum system hence we focus on finite dimensional Hilbert spaces $\Hc\simeq\Cb^d$. 
In the following the symbols $\Bc(\Hc), \Hf(\Hc),$ and $\Hf_0(\Hc)$ denote the space of (bounded) linear operators acting on $\Hc$, the subspace of hermitian operators, the subspace of hermitian traceless operators, respectively.

The state of the system is associated to a density operator $\rho$  in $\Dc(\Hc)=\{\rho\in\Bc(\Hc) | \rho=\rho^{\dag}\geq 0, \tr(\rho)=1\}$. In most of this work, we shall assume that it is a priori \textit{unknown}, 
and that we have access to an experimental setup that provides us access to only partial information about the system's state.
More specifically, we consider to be able to perform measurements of a set of Hermitian observables $\Xc=\{X_i\}_{i=0}^l$ on identical copies of system prepared in the unknown state $\rho$.  

 Each observable can be decomposed as $X_i=\sum_k \alpha_{i,k}\Pi_k$, where $\{\Pi_k\}$ is the associated spectral family of orthogonal projectors. The $\{\alpha_{i,k}\}$ are the possible outcomes of the measurement of the $i$-th observable on the system and the quantity $p_k=\tr(\Pi_k \rho)$ represents the probability of obtaining the $k$-th outcome of the measurement. The expectation value of the observable $X_i \in \Xc$ will be labeled as  $$y_i=\mathbb{E}_{\rho}(X_i)=\tr(X_i \rho).$$
 The expectation values can be reconstructed via empirical averages of repeated experiments on identical preparations (see the Appendix \ref{sec:empav} for details.
 
In this work, we  always consider the identity~operator~$X_0\!=\!I$~belong~to~$\Xc$: note that including the identity operator in the set of available observables $\Xc$ is a technical and not restrictive assumption, as measuring the identity is effectively equivalent to saying that some measurement result have been obtained from the system.
Furthermore, we assume $\Xc$ to be a set of linearly independent operators, so that $|\Xc|=\dim(\Span\{\Xc\})$.

\subsection{Dynamical quantum state tomography}
One of the key tasks in quantum information technologies is the state reconstruction of a system of interest, also known as tomography:\smallskip
\begin{problem}[Tomography]\label{prob:tom}
    Assuming to be able to prepare a quantum system in a given and unknown state $\rho$ and to perform single-shot measurements of a set of observables $\Xc\subset\mathfrak{H}(\mathcal{H})$, uniquely reconstruct $\rho$ from the measurement data. 
\end{problem}\smallskip
If no prior information or resources, in addition to the possibility of measuring the observables in ${\cal X}$ are available, it is possible to uniquely determine the state of the system from multiple measurements performed on it 
if and only if $\Xc$ is informationally complete, that is $\forall \rho_1, \rho_2 \in \Dc(\Hc)$ s.t. $\rho_1\neq \rho_2,$ then $\exists X_i \in \Xc$ s.t. $ \ \Eb_{\rho_1}(X_i) \neq \Eb_{\rho_2}(X_i)$. In particular the set $\Xc$ is informationally complete if it generates $\Bc(\Hc)$. The problem is then solved by estimating the expectation values $y_i=\mathbb{E}_\rho(X_i)$ of a linearly independent set of the measured operators $X_i\in {\cal X}$ as described in the previous section, and by linear inversion or a variational method (maximum likelihood, minimum relative entropy, etc) \cite{paris-estimationbook,zorzi2014,zorzi2014minimal}. For this reason, from now on, we focus on the expectations $y_i$ rather than the single-shot measurement data.

However, an informationally complete set of observables is not necessary if we can exploit its dynamics.
In this work, we assume to have {\em perfect knowledge} on the quantum dynamics of the system, which we after assume to be the same each the later is prepared in the state $\rho$. If the dynamics is non-trivial, measuring multiple times an observable $X_i\in{\cal X}$  after some given time $t$ and computing the average corresponds effectively to obtaining the average of the time-evolved observable in Heisenberg picture. 

A general, physically admissible evolution of operators 
is  associated to a set of completely positive (CP) and unital maps  $\{\Phi_t\}_{t\in\Tc}$ \cite{Nielsen2010}, where $\Tc\subseteq \mathbb{R}$ is the set of accessible times for measurements, which may be either continuous or discrete. A map $\Phi_t$ is unital if  $\Phi_t(\ones)=\ones$. By the Kraus–Stinespring theorem \cite{Nielsen2010}, requiring $\Phi_t$ is CP is equivalent to requiring it admits an operator sum representation i.e. $\Phi_t(\cdot)=\sum_i M_{t,k} \cdot M^\dag_{t,k}$. Moreover, the expected values evolve in time with the corresponding measurement operator, leading to the coupled equations
\begin{align}\label{eq:q_sys} 
    \Sigma:=
\begin{cases}
    X_i[t]=\Phi_{t}({X_i}),\\
     y_i[t]=\Eb_\rho{(X_i[t])}, 
\end{cases}  \ \forall i.
\end{align}

The introduction of the dynamics leads to the following alternative  version of Problem \ref{prob:tom} \cite{dynamicaltomography} .\smallskip
\begin{problem}[Dynamical Quantum State Tomography]\label{prob:din_tom}
    Assuming to be able to prepare a quantum system in a given and unknown state $\rho,$ and to perform measurements of a set of observables $\Xc$ at any chosen time $t\in \Tc$, uniquely reconstruct the initial state of the system $\rho_0$ from the estimated $y_i$.
\end{problem}\smallskip

Since the state of the system is unknown, the expectation values of the observables at any time cannot be computed directly. 
The estimate $\hat{y}_i[t]$ of $y_i[t]$ can be found, as described in Appendix \ref{sec:empav}, by repeated experiments and empirical average. In particular, we let the system evolve until time $t\in \Tc$, perform the measurement of $X_i$, collect the outcome, and repeat the experiment.

This problem is particularly interesting in experimental setups where the goal is to infer information about the state in which the system is prepared, but we do not have a access to a full set of observables. In the next sections we discuss necessary and sufficient conditions on $\Xc$ and the dynamical generator in order to solve Problem \ref{prob:din_tom}.

\section{When is DQST possible?}\label{sec:dqst_feas}
\subsection{General Dynamics}

In the setup considered in this paper, we have access only to expected values of observables in 
$\Xc$ at sequences of times $t\in \Tc$. 
Therefore, a very intuitive necessary condition that ensures DQST is possible for every state in $\Dc(\Hc)$ is that  distinct states yield distinct trajectories of expectations.
Following \cite{MP-TG-FT:2024}, we introduce the following definition.\smallskip
\begin{definition}[UDDA states] \label{prb:DYNUD} A state $\rho\in \Dc(\Hc)$ is {\em uniquely dynamically determined among all states (UDDA)} if there does not exist any other state $\sigma\in \Dc(\Hc)$ such that $\Eb_\rho(X_i[t])=\Eb_\sigma(X_i[t]) \ \forall t\in \Tc, \forall X_i\in \Xc. \hfill \triangle$  
\end{definition}\smallskip
In the remainder of this section, we show that the requirement of \textit{every state being UDDA} is not only necessary for the reconstruction of (every) quantum state, but also sufficient. We will prove so by employing a well-known tool form system ad control theory: \textit{observability analysis}. 

To characterize the conditions for feasibility of DQST, we can equivalently study the set of states which are \textit{not} UDDA. In particular, two states $ {\rho}, {\sigma}\in\Dc(\Hc)$ are \textit{dynamically indistinguishable} (not UDDA) if and only if {
$$ \Eb_\rho(\Phi_t(X_i))-\Eb_\sigma(\Phi_t(X_i))=\Eb_{{\rho}-{\sigma}}(\Phi_{t}(X_i))=0 $$}
$\forall X_i\in \Xc, \forall t\in \Tc $,
where we exploited the linearity of the expectation. This is equivalent to requiring
$X_i~\in~\ker~\Eb_{{\rho}-{\sigma}}(\Phi_{t}(\cdot)) \ \forall X_i\in \Xc, \forall t\in \Tc$. 

Therefore, every state is UDDA if and only if the indistinguishability condition is not satisfied for any couple of states $\rho,\sigma \in \Dc(\Hc)$.
By leveraging  indistinguishability, we introduce the following subspaces \cite{kalman1969topics}.
\begin{definition}[Non-observable/observable subspaces]\!The~\textit{non-observable} subspace~for the system $\Sigma$ is 
\begin{align*}
\Nc&:=\{ Z \in \Bc(\Hc) \ | \ \Eb_Z(\Phi_{t}( {X_i}))=0, \forall t\in \Tc, \forall X_i\in \Xc\}.
\end{align*}
The \textit{observable} subspace is the orthogonal complement to $\Nc$ in $\Bc(\Hc)$, with respect to the standard Hilbert-Schmidt inner product:
\begin{align*}
    \Oc&:=\Nc ^\perp=span\{\Phi_{t}(X_i), \forall X_i\in \Xc, \forall t\in \Tc\}. & \quad \quad \  \triangle
\end{align*}
 \end{definition}\smallskip
It is possible to notice that a state $\rho\in \Dc(\Hc)$ is UDDA if no other state share the same trajectories of expectations and therefore, the projection of $\rho$ onto $\Nc$ is trivial (i.e. $\Pi_\Nc \rho =0$). Moreover, we highlight $\Nc$ is a subspace of traceless Hermitian operators (i.e. $\Nc\subseteq \Hf_0(\Hc)$). This follows from the fact that the identity operator $I$ always belong to the set of available measurement operators $\Xc$ and therefore to the observable subspace $\Nc^\perp=\Oc$: the inner product $\tr[Z^\dag I]$ must be equal to 0 $\forall Z \in \Nc$.

The main definition we will exploit in the reminder of the analysis is the following:
\smallskip
\begin{definition}\label{def:obs}
    The system $\Sigma$  is said to be {\em observable} if {$\Oc=\Bc(\Hc)$ (and therefore $\Nc=\{0\}$)}.
\end{definition}

The following proposition connects observability with the  requirement of every state being UDDA.
\begin{proposition}\label{prop:obs_UDDA}
    Every state $\rho\in \Dc(\Hc)$ is UDDA if and only if the system $\Sigma$ is observable.
\end{proposition}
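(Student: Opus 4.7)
The plan is to prove the two implications of the equivalence separately, and to reformulate both sides in terms of differences of density operators, so that the link to $\Nc$ becomes transparent. The key observation, already extracted in the paragraph preceding the statement, is that two states $\rho,\sigma\in\Dc(\Hc)$ are dynamically indistinguishable if and only if $\rho-\sigma\in\Nc$, since $\Eb_{\rho-\sigma}(\Phi_t(X_i))=\Eb_\rho(\Phi_t(X_i))-\Eb_\sigma(\Phi_t(X_i))$ by linearity of the trace. Moreover $\rho-\sigma$ is always Hermitian and traceless, so it lives in $\Hf_0(\Hc)$, which is consistent with the earlier remark that $\Nc\subseteq\Hf_0(\Hc)$.

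The backward implication (observability implies every state UDDA) is immediate. Assuming $\Nc=\{0\}$, if a state $\rho$ were not UDDA, there would exist $\sigma\neq\rho$ in $\Dc(\Hc)$ with identical expectation trajectories. Then $\rho-\sigma$ would be a nonzero element of $\Nc$, contradicting $\Nc=\{0\}$. I would write this out in one or two lines.

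The forward implication requires a small but important construction and is where the main conceptual content sits. I would argue by contrapositive: assume $\Nc\neq\{0\}$ and produce a state that fails to be UDDA. Pick any nonzero $Z\in\Nc$; since $Z\in\Hf_0(\Hc)$, it is Hermitian and traceless. Use the maximally mixed state $\rho_\star=\tfrac{1}{d}I$ as a reference: because $\rho_\star$ is strictly positive definite, there exists $\epsilon>0$ small enough that $\sigma:=\rho_\star+\epsilon Z$ is still positive semidefinite; moreover $\sigma$ is Hermitian and has unit trace, so $\sigma\in\Dc(\Hc)$. By construction $\rho_\star-\sigma=-\epsilon Z\in\Nc$, hence $\rho_\star$ and $\sigma$ are dynamically indistinguishable while being distinct, so $\rho_\star$ is not UDDA.

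The only mildly delicate step is ensuring that the perturbation stays inside $\Dc(\Hc)$, which is why I would anchor at $\rho_\star$ (or, more generally, any element of the relative interior of $\Dc(\Hc)$): strict positivity ensures a full open ball around $\rho_\star$ in the affine subspace of unit-trace Hermitians lies inside $\Dc(\Hc)$, so any direction $Z\in\Nc\subseteq\Hf_0(\Hc)$ can be followed for a sufficiently short distance. I do not foresee any other obstacle; the content of the proposition is essentially a dictionary between the definitions of UDDA and of the non-observable subspace $\Nc$, and the perturbation trick is the standard way to convert a nonzero traceless direction into a pair of legitimate density operators.
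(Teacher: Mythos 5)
Your proof is correct and follows essentially the same route as the paper: the backward direction via the observation that a difference of indistinguishable states is a nonzero element of $\Nc$, and the forward direction by perturbing a strictly positive state along a nonzero direction in $\Nc$ (the paper uses an arbitrary full-rank state where you use $I/d$, an inessential difference). Your explicit justification of why the perturbed operator remains in $\Dc(\Hc)$ is slightly more careful than the paper's one-line assertion.
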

\begin{proof}
We begin by proving that observability is a necessary condition for every state to be UDDA. By contradiction assume the system is \textit{not} observable. Given a full-rank state $\rho$ it is always possible to find $Z\in\Nc$ such that $Z\neq 0, \tr[Z]=0$, and $\epsilon > 0$ so that $\Bar{\rho}={\rho+\epsilon Z}$ is a state. This implies $\Eb_{\Bar{\rho}}(\Phi_{t}(X_i))=\Eb_{{\rho}}(\Phi_t(X_i))+\epsilon \Eb_Z(\Phi_t(X_i)))=\Eb_{\rho}(\Phi_t(X_i)) \ \forall X_i\in \Xc, \ t\in \Tc $, therefore $\rho$ and $\Bar{\rho}$ are dynamically indistinguishable and $\rho$ cannot be UDDA.

We now prove sufficiency. By contradiction suppose the system is observable and there exists two undistinguishable states $\rho,\sigma\in \Dc(\Hc),\rho\neq\sigma$. Let $Z=\rho-\sigma$, then $Z\in \Bc(\Hc), Z\neq 0$ and $\Eb_Z(\Phi_t(X_i))=0 \ \forall X_i\in \Xc, \forall t\in \Tc$. Therefore $\Nc\neq\{0\}$ and the system is not observable leading to the contradiction. 
\end{proof}
Next, we shall highlight how observability is the essential property that ensures feasibility of DQST for every state in $\Dc(\Hc)$.\smallskip 

\subsection{Linear matrix representation via vectorization}
We first introduce an alternative representation of the system $\Sigma$ in \eqref{eq:q_sys}.
A matrix $B\in \Cb^{d\times d}$ can be associated to a vector $b\in\Cb^{d^2}$ by a linear transformation (vectorization) $b={\rm vec}(B)$ that stacks the columns of $B$ one below the other so that $\vv{b}_{(j-1)d + i}=B_{ij} \ \forall \ i,j\in \{0,\dots,d-1\}.$

The key properties of this linear transformation are the following \cite{gilchrist2009vectorization}: let $A,B,C\in \mathbb{C}^{d\times d}$, then
\begin{enumerate}[P.1)]
    \item $\vect(ABC)=(C^\top \otimes A) \vect(B)$;\label{prop:P1}
    \item $\tr(A^\dag B)=\vect(A)^\dag \vect(B)$.\label{prop:P2}
\end{enumerate}
Let $\vv{x}_i\!\!=\!\!\vect(X_i)$,\! $\vv{x}_i[t]\!\!=\!\!\vect(X_i[t])$,\! $\vv{r}_0\!\!=\!\!\vect(\rho_0)$. We then have,\! 
exploiting P.\ref{prop:P1}, P.\ref{prop:P2}:
\begin{subequations}\label{eq:vec_maps}
    \begin{align}
        &\vect(\Phi_{t}(X_i))=\sum_k (M_{k,t}^\dag)^T \otimes M_{k,t} \vv{x}_i= \hat{\Phi}_{t}\vv{x}_i,\label{eq_vec_maps_a}\\
       & \vect(\Eb_{\rho_0}(X_i[t]))=\vect(\rho_0 X_i[t] )= \vv{r}_0^\dag \vv{x}_i[t]= \vv{x}_i[t]^\dag \vv{r}_0.\label{eq_vec_maps_b}
    \end{align}
\end{subequations}
Furthermore, in the rest of the paper we label with ${\rm vec}^{-1}(\cdot)$ the inverse of the map ${\rm vec}(\cdot),$ {i.e. ${\rm vec}^{-1}({\rm vec}(B))=B$ for all $B\in\Cb^{d\times d}$.} 

Thanks to the previous facts, we obtain an alternative representation of system $\Sigma$ in \eqref{eq:q_sys}:
    \begin{equation}\label{eq:dis_sys_vect_n}
    \Sigma_{v}:=\begin{cases}
     \vv{x}_i[t]=\hat{\Phi}_t\vv{x}_i,\\
        {y}_i[t]=\vv{r}_0^\dag\vv{x}_i[t]=x_i^\dag [t] r_0
    \end{cases} \forall i.
\end{equation}
This new representation enables us to formulate simple linear-algebraic conditions for observability as follows.

Let $\Rc=\{X_{i_1}[t_1],\dots X_{i_q}[t_{q}]\}$, with $\{X_{i_1},\dots X_{i_q}\}\subseteq \Xc$ and \{$t_1,\dots t_{q}\}\subseteq \Tc$, be a subset of all evolved observables. 
The vector $y_{\Rc}$ whose entries are the expectations (outputs of the system) of observables in $\Rc$ can be found by considering a matrix $O_\Rc$ whose rows are the conjugate transpose of the vectorized observables as
\begin{align}\label{eqn:r_expectations}
    y_{\Rc}=\begin{bmatrix}
        x_{i_1}^\dag[t_1] \\ \vdots \\ x^\dag_{i_q}[t_q]         
    \end{bmatrix} \vv{r}_0=O_{\Rc}\,\vv{r}_0.
\end{align}
\begin{lemma}\label{lem:obs_matrix}
    A quantum system is observable if and only if there exists a set of evolved observables $\Rc$ for which  $\rank(O_{\Rc})=d^2$.
\end{lemma}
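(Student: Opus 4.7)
The plan is to translate the statement directly into the vectorized picture of \eqref{eq:dis_sys_vect_n} and then read off both implications from a single rank/span identification. The definition of observability says $\Oc = \Bc(\Hc)$, i.e.\ $\dim\Oc = d^2$. Applying $\vect(\cdot)$, this is equivalent to
\[
\Span\{\vv{x}_i[t] : X_i\in\Xc,\ t\in\Tc\} \;=\; \Cb^{d^2},
\]
by property P.\ref{prop:P1} and linearity of $\vect$. This is the central reformulation on which everything will rest.

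For the \emph{only if} direction, I would argue as follows. Assume the system is observable, so the collection of vectors $\{\vv{x}_i[t]\}$ spans $\Cb^{d^2}$. Then one can extract $d^2$ linearly independent vectors from this family; call the corresponding evolved observables $X_{i_1}[t_1],\dots,X_{i_{d^2}}[t_{d^2}]$ and let $\Rc$ be this set. By \eqref{eqn:r_expectations}, the rows of $O_\Rc$ are exactly the conjugate transposes $\vv{x}_{i_k}^\dagger[t_k]$. Taking the conjugate transpose preserves rank, so $\rank(O_\Rc)=d^2$.

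For the \emph{if} direction, suppose some $\Rc=\{X_{i_1}[t_1],\dots,X_{i_q}[t_q]\}$ gives $\rank(O_\Rc)=d^2$. Since $O_\Rc\in\Cb^{q\times d^2}$, its row space is all of $\Cb^{d^2}$. Taking conjugate transposes again, the vectors $\vv{x}_{i_k}[t_k]$ span $\Cb^{d^2}$, so a fortiori the larger family $\{\vv{x}_i[t]: X_i\in\Xc,\ t\in\Tc\}$ spans $\Cb^{d^2}$. Undoing the vectorization yields $\Oc=\Bc(\Hc)$, i.e.\ the system is observable.

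There is essentially no hard step: the content of the lemma is that ``$\Oc$ spans $\Bc(\Hc)$'' is the same statement as ``one can collect $d^2$ evolved observables whose vectorizations are linearly independent,'' and the observability matrix $O_\Rc$ is just a convenient bookkeeping device for testing that. The only point that deserves a sentence is the rank-preservation under the (conjugate) transpose that converts ``columns'' in the definition of $\Oc$ into ``rows'' of $O_\Rc$, together with the fact (used implicitly) that a spanning family always contains a basis of the target space so a set $\Rc$ of size exactly $d^2$ can be chosen.
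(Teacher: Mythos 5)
Your proposal is correct and follows essentially the same route as the paper's proof: both reduce the statement to the observation that vectorization preserves linear independence, so that $\rank(O_\Rc)=d^2$ holds exactly when the evolved observables in $\Rc$ span $\Oc=\Bc(\Hc)$. Your version merely spells out the two implications and the rank-invariance under conjugate transposition more explicitly; no substantive difference.
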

\begin{proof} The rows of $O_{\Rc}$ 
    are given by the conjugate transpose vectorization of observables in $\Rc$. These observables are linearly independent (and their span is equal to $\Bc(\Hc)$) if and only if the corresponding vectorization are. 
    Therefore $O_\Rc$ has $d^2$ linearly independent rows (and therefore $\rank(O_{\Rc})=d^2$) if and only if there exist a set $\Rc$ of cardinlity $d^2$ such that ${\rm span}\,\Rc=\Oc=\Bc(\Hc)$, i.e. the system is observable.
\end{proof}
Notice that the matrix $O_\Rc$ satisfying the above lemma in general is not unique. In the proof of the previous proposition we discussed how $\rank(O_\Rc)=d^2$ if and only if the observables in $\Rc$ are generators of the observable subspace $\Oc$. The choice of generators of $\Oc$ in general  is not unique.

By exploiting the previous results we now establish a connection  between feasibility of DQST and observability.
\begin{proposition}\label{prop:obs_DQST}
     DQST is feasible for every state $\rho_0\in \Dc(\Hc)$ if and only if the system $\Sigma$ is observable. The vectorized state is obtained as 
    \begin{equation}\label{eqn:state_reconstruction}
         r_0=({O}_\Rc^\dag {O}_\Rc)^{-1} {O}_\Rc^\dag {y}_\Rc,
    \end{equation}
    and the state can be retrieved as $\rho_0={\rm vec}^{-1}(r_0).$
\end{proposition}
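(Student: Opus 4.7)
The plan is to split the proposition into two parts: the equivalence between feasibility of DQST and observability of $\Sigma$, and the explicit reconstruction formula. The first part will follow almost immediately from Proposition~\ref{prop:obs_UDDA} once I unpack what ``feasibility'' means operationally, while the second part is a consequence of Lemma~\ref{lem:obs_matrix} together with elementary linear algebra on the vectorized representation $\Sigma_v$.

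For the equivalence, I would first observe that DQST is feasible for every $\rho_0\in\Dc(\Hc)$ exactly when the map sending $\rho_0$ to the collection of expectations $\{y_i[t]\}_{X_i\in\Xc,\,t\in\Tc}$ is injective on $\Dc(\Hc)$. By Definition~\ref{prb:DYNUD}, this injectivity is precisely the requirement that every state be UDDA. Proposition~\ref{prop:obs_UDDA} then identifies this with observability of $\Sigma$, closing both directions simultaneously. The implication ``observable $\Rightarrow$ feasible'' will in fact be reinforced by the constructive formula derived in the next step.

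For the reconstruction formula, I would assume $\Sigma$ is observable and invoke Lemma~\ref{lem:obs_matrix} to obtain a set $\Rc$ of evolved observables for which $\rank(O_\Rc)=d^2$; since $O_\Rc\in\Cb^{|\Rc|\times d^2}$, this is full column rank. From \eqref{eqn:r_expectations} we have the linear relation $y_\Rc=O_\Rc\, \vv{r}_0$, and full column rank of $O_\Rc$ guarantees that $O_\Rc^\dag O_\Rc\in\Cb^{d^2\times d^2}$ is Hermitian positive definite and therefore invertible. Left-multiplying both sides by $(O_\Rc^\dag O_\Rc)^{-1}O_\Rc^\dag$ yields \eqref{eqn:state_reconstruction}, and applying ${\rm vec}^{-1}$ recovers $\rho_0$. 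This simultaneously shows that the reconstruction is well defined and unique, confirming feasibility.

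There is no substantive obstacle: the two technical ingredients (Proposition~\ref{prop:obs_UDDA} and Lemma~\ref{lem:obs_matrix}) already contain the hard work. The only mild subtlety is to make clear that the formula does not depend on the particular choice of $\Rc$ realizing full rank, since any two such reconstructions must agree by injectivity of $\vv{r}_0\mapsto y_\Rc$; and to note that, while the expression is written as a least-squares pseudoinverse, in this setting $y_\Rc$ lies exactly in the column space of $O_\Rc$, so the formula returns the exact initial state rather than an approximation.
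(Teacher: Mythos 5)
Your proposal is correct and follows essentially the same route as the paper: necessity comes from the indistinguishability characterization in Proposition~\ref{prop:obs_UDDA}, and sufficiency is established constructively via Lemma~\ref{lem:obs_matrix} and left-multiplication by $O_\Rc^\dag$ to invert the full-column-rank system $y_\Rc = O_\Rc\,\vv{r}_0$. Your added remarks --- that the reconstruction is independent of the particular full-rank $\Rc$ chosen, and that $y_\Rc$ lies exactly in the column space of $O_\Rc$ so the least-squares expression is exact --- are correct refinements not spelled out in the paper but do not change the argument.
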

\begin{proof} We first show that observability implies DQST is feasible for every state. In view of the previous lemma, if the system is observable $\rank(O_\Rc)=d^2$ and therefore $\ker(O_\Rc)=\{0\}$. This implies that if $y_\Rc$ and $O_\Rc$ are known exactly, then the system of equations \eqref{eqn:r_expectations} admits a unique solution $r_0$. It is then possible to pre-multiply both sides of equation \eqref{eqn:r_expectations} by ${O_\Rc}^\dag$.
\begin{equation}\label{eqn:r_expectations_mod}
    O_\Rc^\dag {y}_\Rc=O_\Rc^\dag O_\Rc {r}_0.  
\end{equation}
Notice that ${O_\Rc}^\dag {O_\Rc}$ is a square invertible matrix which has the same kernel of ${O_\Rc}$. 
Therefore, \eqref{eqn:r_expectations} and \eqref{eqn:r_expectations_mod} admits the same solution and $\vv{r}_0$ can then be reconstructed exactly {by linear inversion} as \eqref{eqn:state_reconstruction}. Finally, the density operator describing the state of the system can be retrieved as ${\rho}_0=\vect^{-1}(r_0)$.

We now prove by contradiction the necessity of observability for feasibility of DQST of every state. Suppose the system is not observable, then as proven in Proposition \ref{prop:obs_UDDA} there exist two dynamically indistinguishable states $\rho_0,\sigma_0\in \Dc(\Hc)$. Let $r_0=\vect(\rho_0)$ and $s_0=\vect(\sigma_0)$, then $y_\Rc=O_\Rc r_0=O_\Rc s_0$ and \ref{eqn:r_expectations} does not admit a unique solution for all states (DQST of $\rho_0=\vect^{-1}(r_0)$ and $\sigma_0=\vect^{-1}(s_0)$ is not possible).
\end{proof}
    To conclude this subsection, we summarize the key results presented so far. The requirement that every state is UDDA is a necessary condition for DQST. As shown in Proposition \ref{prop:obs_UDDA}, this requirement is equivalent to the system being observable. Furthermore, Proposition \ref{prop:obs_DQST} establishes that observability is both necessary and sufficient for DQST of every state. Therefore, the following three conditions are equivalent: (i) the system is observable, (ii) every state is UDDA, and (iii) DQST of every state is possible. 
The highlighted equivalences between these requirements are depicted in Figure \ref{fig:equiv_cond_DQST}.
{\begin{remark}
    It is worth noting that the class of non-Markovian dynamics we consider is very general, covering virtually any (determinsitic, average) open system evolution under the hypothesis of an initially factorized state (see e.g. the standard derivation of open system dynamics in \cite{Nielsen2010}). The analysis extends non-trivially previous results: Not all dynamics can be satisfactorily {\em approximated} by Markovian ones. In particular, the reduced dynamics of a system of interest with an Hamiltonian coupling with a finite-dimensional ``environment'', due to Poincare` recurrences, exhibits strong non-Markovian behaviour, that is revival of the initial conditions on sufficiently long time scales. In discrete time, an extreme example is provided by a pair of identical systems, evolving independently with CPTP maps, whose state is swapped at the end of every time step. 
\end{remark}}

\begin{figure}
    \centering
    \includegraphics[width=0.8\linewidth]{./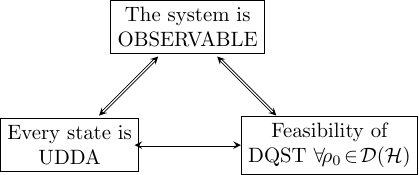}
    \caption{Equivalences between Observability, Feasibility of DQST and the requirement that every state for the system is UDDA. The three conditions are equivalent.}
    \label{fig:equiv_cond_DQST}
\end{figure}

\section{Feasibility of DQST for Markov Dynamics}
\subsection{Continuous, discrete and sampled semigroup dynamics}

We next specialize the previous to some important classes of quantum systems, whose evolution maps satisfy a forward composition law and are time-invariant. In the next sections, we show that these systems allow for a significant simplification and a systematic treatment of the DQST problem. 

A quantum system is called \textit{time-homogeneous Markovian} if it satisfies the Markov composition property 
\begin{align*}
    \Phi_{t+s} = \Phi_{t} \circ \Phi_{s}, \quad \forall t \geq s \geq 0.
\end{align*}
If such forward composition law holds, the dynamics of the system can be specified identifying a dynamical generator:
\subsubsection{Continuous Time dynamics} The evolution of measurement operators at times $\mathcal{T}= \Rb_{\geq 0}$, is described by a continuous semigroup of CP and unital maps, also known as Quantum Dynamical Semigroup (QDS) \cite{alicki2007semigroups} $\{\Phi_t\}$. The corresponding semigroup generator $\Lc$ can be expressed in Lindblad \cite{gks1979, lindblad1976} canonical form as
\begin{equation}\label{eq:lindblad}
 \Lc(\cdot)=i [H,\cdot] +\sum_k\left(L_{k}^\dag \cdot L_{k}-\frac{1}{2}\left\{L_{k}^\dag L_{k},\cdot\right\}\right).
\end{equation}
where \(H = H^\dag\) is the time-invariant Hamiltonian of the system, and the operators \(\{L_k\}\), known as noise operators, represent the non-Hamiltonian components of the generators, inducing non-unitary and irreversible dynamics.

The observables then satisfy the differential equation 
\begin{align}\label{eq:ct_sys}
    \dot{X_i}[t]=\Lc(X_i[t]).
\end{align}
Therefore, a time-homogeneous, continuous-time Markovian evolution is associated with propagators of the form $\Phi_t=e^{\mathcal{L} t}$.
\subsubsection{Discrete Time dynamics} The evolution of the measurement operators at times
$\Tc= \Nb$ is described by the difference equation
\begin{align} \label{eq:dt_sys}
    X_i[t+1]=\Phi(X_i[t]).
\end{align}
 The propagators for a time-homogeneous discrete time Markovian evolution are therefore given by the repeated application of the CP and unital map $\Phi=\Phi_1$, i.e.
\begin{align*}
    \Phi_{t}= \overset{t-times}{ \Phi \circ  \dots \circ \Phi}.
\end{align*}
\begin{remark}[\textit{Discretized evolutions}]\label{rem:discretized_ev}
In many application it is possible to perform measurements of a quantum systems of interest undergoing a continuous time evolution only at a discrete set of times $t\in \{k\Delta_t \},$ where $\Delta_t$ is a fixed sampling time interval. In this case it is often convenient to consider a discretized version of the continuous time evolution, which is described by the difference equation
\begin{align}\label{eq:discretized_sys}
    X_i[(k+1)\Delta_t]=\Phi_{\Delta t}(X_i[k\Delta t]),
\end{align}
where the propagator has the form $\Phi_{\Delta t}=e^{\Lc \Delta t}$ and $\Lc$ is the generator of the continuous time evolutions.
  \end{remark}

\subsection{Observability for Continuous-Time Semigroup  dynamics}\label{subsec:ct_dyn_obs}

Verifying observability of the system using Definition \ref{def:obs} or Lemma \ref{lem:obs_matrix} can be challenging, as it involves evaluating the trajectories of expected values of observables in $\Xc$ at times $\Tc$. While the set of available observables is finite, the set $\Tc$ may be of infinite cardinality, making the task hard.

In this and the following subsection, we focus on time-homogeneous Markovian quantum systems. Notably, for such systems the non-observable subspace can be identified by analyzing only the system’s generator, without the need to explicitly compute the propagators. This significantly simplifies the task of verifying observability.

\smallskip
\begin{proposition}\label{prop:ct_obs}
    Given a continuous-time quantum system as in \eqref{eq:ct_sys}, \eqref{eq:q_sys}, with ${\rm dim}(\Hc)=d$, there exists an integer $k^*\leq d^2-1$ such that \footnote{$\Lc^t({X_i})$ denotes the application of the map $\Lc$ to $X_i$ $t$-times.}
    \begin{align*}
\Nc =\{ Z \in \Bc(\Hc) \ | \ \Eb_Z(\Lc^t({X_i}))=0, \ \forall t\in \mathbb{N}_{\leq k^*}, \forall X_i\in \Xc\}.
\end{align*}
Accordingly, the system is observable if and only if 
\begin{equation}\label{eq:ct_obs}
    \Oc={\rm span}\{\mathcal{L}^{t}{(X_i)}, \ \ \forall X_i\in \Xc, \forall t\in \mathbb{N}_{\leq k^*} \}=\Bc(\Hc).
\end{equation}
\end{proposition}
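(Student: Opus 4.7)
The plan is to exploit the semigroup structure $\Phi_t = e^{\Lc t}$ and combine the analyticity of $t\mapsto \Phi_t$ with Cayley--Hamilton applied to $\Lc$ viewed as a linear map on the $d^2$-dimensional space $\Bc(\Hc)$. Specifically, for any fixed $Z\in\Bc(\Hc)$ and $X_i\in\Xc$, the scalar function $f_{Z,i}(t):=\Eb_Z(\Phi_t(X_i))=\tr(Z^\dagger e^{\Lc t}(X_i))$ is entire in $t$, with Taylor expansion
\begin{equation*}
f_{Z,i}(t)\;=\;\sum_{k=0}^{\infty}\frac{t^k}{k!}\,\tr\!\bigl(Z^\dagger\Lc^k(X_i)\bigr)\;=\;\sum_{k=0}^{\infty}\frac{t^k}{k!}\,\Eb_Z(\Lc^k(X_i)).
\end{equation*}
Since an entire function vanishes on all of $\Tc=\Rb_{\geq 0}$ iff all its Taylor coefficients at $0$ vanish, the condition $Z\in\Nc$ is equivalent to $\Eb_Z(\Lc^k(X_i))=0$ for every $k\in\mathbb{N}$ and every $X_i\in\Xc$. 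This already collapses the continuum of times $\Tc$ into a countable family of algebraic conditions.

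Next I would use Cayley--Hamilton to truncate this countable family to a finite one. Regarding $\Lc$ as an endomorphism of the $d^2$-dimensional vector space $\Bc(\Hc)$, its minimal polynomial has degree at most $d^2$, so $\Lc^{d^2}$ is a linear combination of $I,\Lc,\dots,\Lc^{d^2-1}$. Consequently $\Lc^k(X_i)\in\Span\{X_i,\Lc(X_i),\ldots,\Lc^{d^2-1}(X_i)\}$ for every $k\geq 0$ and every $X_i\in\Xc$, so the vanishing conditions for $k\leq d^2-1$ already imply those for all $k$. This gives the representation of $\Nc$ with $k^*=d^2-1$ and, by taking orthogonal complements, the Krylov-type identity
\begin{equation*}
\Oc\;=\;\Span\{\Lc^t(X_i)\,:\,X_i\in\Xc,\;t\in\mathbb{N}_{\leq d^2-1}\}.
\end{equation*}

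To sharpen the bound to some $k^*\leq d^2-1$ (rather than always $d^2-1$), I would use the standard nested-Krylov argument: setting $\Oc_j:=\Span\{\Lc^{t}(X_i)\,:\,X_i\in\Xc,\,0\leq t\leq j\}$, the chain $\Oc_0\subseteq\Oc_1\subseteq\cdots\subseteq\Bc(\Hc)$ is non-decreasing and bounded by dimension $d^2$. As soon as $\Oc_{j+1}=\Oc_j$, applying $\Lc$ to a generating family of $\Oc_j$ produces elements of $\Oc_{j+1}=\Oc_j$, so $\Oc_j$ is $\Lc$-invariant and $\Oc_{j+m}=\Oc_j$ for all $m\geq 0$. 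Picking $k^*$ to be the first index at which stabilization occurs, dimension counting forces $k^*\leq d^2-1$, and the two identities above then hold with this $k^*$. The observability statement follows immediately from Definition~\ref{def:obs}.

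The main technical point to watch is the passage from ``$f_{Z,i}\equiv 0$ on $\Rb_{\geq 0}$'' to ``all Taylor coefficients at $0$ vanish''; this is where I am implicitly using that $\Phi_t=e^{\Lc t}$ admits a globally convergent power series, a property specific to the Markovian semigroup case. Without this analyticity, the reduction from the continuum $\Tc$ to finitely many powers of $\Lc$ would fail, which is precisely why this simplification does not extend verbatim to general CPTP evolutions considered in Section~\ref{sec:DQST}.
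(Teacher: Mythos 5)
Your proposal is correct and follows exactly the Krylov-subspace route that the paper invokes: the paper does not spell out the argument but defers to the classical observability references \cite{kalman1969topics, wonham}, whose proof is precisely your combination of (i) analyticity of $t\mapsto e^{\Lc t}$ to reduce the continuum of times to the Taylor coefficients $\Eb_Z(\Lc^k(X_i))$, and (ii) Cayley--Hamilton plus the stabilizing chain $\Oc_0\subseteq\Oc_1\subseteq\cdots$ to truncate at some $k^*\leq d^2-1$. Your closing remark correctly identifies why this reduction is specific to the time-homogeneous Markovian case and does not carry over to the general CPTP setting of Section~\ref{sec:DQST}.
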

\medskip
Proof of the proposition, based essentially on Krylov subspaces, can be found in {\cite{kalman1969topics} and \cite[Chapter 3]{wonham}}.

Similarly to Lemma \ref{lem:obs_matrix} observability of the system can be checked by evaluating the rank of a matrix that we will call \textit{continuous-time observability matrix}. 

Let $\vect{(\Lc(X_t))}=\hat L\, x_i$, where $\hat L=\hat L_H+\hat L_d$ with  
\begin{align}\label{eq:L}
    &\hat L_H=i(I\otimes H-H^\top\!\otimes I ),\\
    &\hat L_d=\sum_k\!L_k^\top\! \otimes L_k^\dag-\frac{1}{2}(I\otimes L_k^\dag L_k +(L_k^\dag L_k)^\top\!\otimes I),\nonumber
\end{align}
and let $X=\begin{bmatrix}x_1 \dots x_l \end{bmatrix}.$
The continuous-time observability matrix is defined as 
\begin{align}\label{eq:ct_obs_m}
    O_c^\dag=\begin{bmatrix}
        X & \hat L^\dag X & \dots & \hat L^  {\dag(d^2-1)} X
    \end{bmatrix}.
\end{align}
\begin{corollary} \label{cor:ct_obs}
 A continuous-time quantum system as in \eqref{eq:ct_sys}, \eqref{eq:q_sys}, with ${\rm dim}(\Hc)=d$, is observable if and only if $\rank(O_c)=d^2$.
\end{corollary}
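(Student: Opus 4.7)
The plan is to derive this corollary directly from Proposition \ref{prop:ct_obs} via vectorization, translating the operator-level spanning condition that characterizes observability into a matrix rank condition. The argument has three short steps.

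First, I would apply the vectorization dictionary to the generators of $\Oc$. Iterating property P.\ref{prop:P1} gives $\vect(\Lc^t(X_i)) = L^t x_i$ for every $t \geq 0$ and every $i$; hence, by Proposition \ref{prop:ct_obs},
\begin{equation*}
\vect(\Oc) \;=\; \Span\{L^t x_i \,:\, i=1,\dots,l,\ t=0,1,\dots,d^2-1\}.
\end{equation*}
Since $\vect$ is a linear bijection $\Bc(\Hc) \to \Cb^{d^2}$, we have $\dim\Oc = \dim\vect(\Oc)$.

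Second, I would identify this dimension with $\rank(O_c)$. By \eqref{eq:ct_obs_m}, $O_c^\dag$ is obtained by horizontally concatenating the blocks $X, L^\dag X, \ldots, L^{\dag(d^2-1)}X$, whose columns (after taking Hermitian adjoints of each block, which amounts to passing from $O_c^\dag$ to $O_c$ and is rank-preserving) span the same Krylov family identified in the first step. Invariance of rank under Hermitian conjugation then yields $\rank(O_c) = \rank(O_c^\dag) = \dim\vect(\Oc)$. This mirrors the argument of Lemma \ref{lem:obs_matrix}, restricted here to the finite set of Krylov-generated evolved observables guaranteed by Proposition \ref{prop:ct_obs}.

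Finally, invoking Definition \ref{def:obs}, the system is observable iff $\Oc = \Bc(\Hc)$, iff $\dim\vect(\Oc) = d^2$, iff $\rank(O_c) = d^2$. The main work is already done by Proposition \ref{prop:ct_obs}: the nontrivial fact is that powers of $L$ beyond $L^{d^2-1}$ are linearly dependent on the earlier ones, a Cayley--Hamilton consequence that allows truncating the a priori infinite Krylov chain to a finite one of depth $d^2-1$. With that bound in hand, the corollary reduces to a direct rank computation and presents no further obstruction.
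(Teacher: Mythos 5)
Your proof is correct and follows essentially the same route as the paper's: the paper likewise derives the corollary from Proposition~\ref{prop:ct_obs} by observing that the rows of $O_c$ are the (conjugate-transposed) vectorized Krylov generators $\Lc^t(X_i)$, $t\le d^2-1$, of $\Oc$, so that $\rank(O_c)=\dim\Oc$ and observability is equivalent to $\rank(O_c)=d^2$, exactly as in Lemma~\ref{lem:obs_matrix}. The only caveat, inherited from the paper's own Eq.~\eqref{eq:ct_obs_m}, is that the blocks there carry $L^\dag$ rather than $L$, so your identification of the columns of $O_c^\dag$ with the family $\{L^t x_i\}$ implicitly uses the dagger-free convention --- which is the one consistent with the paper's own remark that the rows of $O_c$ are the vectorized generators of $\Oc$.
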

This is the application to quantum dynamics of the {\em Kalman rank condition,} recalled in Proposition \ref{prop:krc}. The proof immediately follows from  Proposition \ref{prop:ct_obs} by following a reasoning similar to the one in the proof of Lemma \ref{lem:obs_matrix}. More in detail, it is possible to notice that the rows of $O_c$ are the vectorized generators of $\Oc$ (equation \eqref{eq:ct_obs}).
\subsection{Observability for Discrete Time Semigroup dynamics}\label{sec:dt-obs}
For discrete-time evolutions, a result similar to the one considered for continuous-time systems holds \cite{kalman1969topics, wonham}.\smallskip
\begin{proposition}\label{prop:dt_obs}
    Given a discrete-time quantum system $\Sigma$ as in \eqref{eq:dt_sys}, \eqref{eq:q_sys}, with ${\rm dim}(\Hc)=d$, there exists an integer $k^*\leq d^2-1$ such that  
\begin{align*}
\Nc =\{ Z \in \Bc(\Hc) \ | \ \Eb_Z(\Phi_t( {X_i}))=0, \ \forall t\in \mathbb{N}_{\leq k^*},\forall X_i\in \Xc\}.
\end{align*}
The system is observable if and only if 
\begin{align}\label{eq:dt_obs}
\Oc={\rm span}\{\Phi_{t}(X_i), \forall X_i\in \Xc, \forall t\in \mathbb{N}_{\leq k^*}\}=\Bc(\Hc).
\end{align} 
\end{proposition}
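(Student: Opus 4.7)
The plan is to mirror the Krylov-subspace reasoning used for the continuous-time case (Proposition~\ref{prop:ct_obs}), now adapted to the discrete semigroup generated by $\Phi$. Working in vectorized form, let $\hat{\Phi}\in\Cb^{d^2\times d^2}$ be the matrix such that $\vect(\Phi(X))=\hat{\Phi}\,\vect(X)$ as in \eqref{eq_vec_maps_a}; then $\Phi^t(X_i)$ corresponds to $\hat{\Phi}^t x_i$, and by property~P.\ref{prop:P2} the non-observable subspace $\Nc$ consists of exactly those $Z$ whose vectorization is Hilbert--Schmidt orthogonal to every $\hat{\Phi}^t x_i$.

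First I would introduce the ascending chain
\[
\Oc_k := \Span\{\Phi^{t}(X_i):\ 0\leq t\leq k,\ X_i\in\Xc\},\qquad k\in\Nb,
\]
so that $\Oc_0\subseteq \Oc_1\subseteq\cdots\subseteq \Bc(\Hc)$ and $\dim\Oc_k$ is bounded above by $d^2$. The crucial step is a stabilization lemma: if $\Oc_{k+1}=\Oc_k$ for some $k$, then $\Oc_{k'}=\Oc_k$ for every $k'\geq k$. The argument is a one-line induction: the equality means $\Phi^{k+1}(X_i)\in \Oc_k$ for each $X_i\in\Xc$; applying $\Phi$ gives $\Phi^{k+2}(X_i)\in \Phi(\Oc_k)\subseteq \Oc_{k+1}=\Oc_k$, and iterating yields $\Phi^{t}(X_i)\in\Oc_k$ for all $t\geq k$. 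Since the sequence $\dim\Oc_k$ is integer-valued, non-decreasing, and capped at $d^2$, it must stabilize at some index $k^*\leq d^2-1$. Equivalently, one can invoke Cayley--Hamilton on $\hat{\Phi}$ to write every $\hat{\Phi}^{t}$ with $t\geq d^2$ as a linear combination of $I,\hat{\Phi},\ldots,\hat{\Phi}^{d^2-1}$, which again shows that the infinite family $\{\hat{\Phi}^t x_i\}_{t\in\Nb}$ spans the same subspace as its truncation to $t\leq d^2-1$.

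Both conclusions of the statement then follow directly. The characterization of $\Nc$ over $t\in\Nb_{\leq k^*}$ is just orthogonality against $\Oc_{k^*}$, and observability, which is equivalent to $\Nc=\{0\}$ by Proposition~\ref{prop:obs_UDDA}, coincides with $\Oc_{k^*}=\Bc(\Hc)$, yielding \eqref{eq:dt_obs}. I do not anticipate any substantial obstacle: the only delicate point is the stabilization induction, which hinges on the fact that the chain of spans is closed under further applications of $\Phi$ as soon as it stops growing, i.e.\ the discrete analogue of the Kalman rank condition recalled in Appendix~\ref{sec:PBH}. A minor point worth making explicit is that $k^*$ can be taken uniform in the choice of $\Xc$, depending only on $d$ and $\Phi$ through the dimension bound.
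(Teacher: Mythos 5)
Your proposal is correct and follows exactly the standard Krylov-subspace/Cayley--Hamilton argument that the paper itself invokes (it defers the proof to \cite{kalman1969topics, wonham}, describing it as "based essentially on Krylov subspaces"): the stabilizing ascending chain $\Oc_k$, the dimension bound giving $k^*\leq d^2-1$, and the identification of $\Nc$ as the orthogonal complement of the stabilized subspace. No gaps; the stabilization induction and the reduction of observability to $\Oc_{k^*}=\Bc(\Hc)$ are exactly as intended.
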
\medskip
Similarly to the continuous time case, let $X=\begin{bmatrix}x_1 \dots x_l \end{bmatrix},$  the observability of the system can be checked by computing the rank of the \textit{discerete-time observability matrix} 
\begin{align*}
    O_d^\dag= \begin{bmatrix}
        X & \hat{\Phi}_1 X \dots \hat{\Phi}_1^{d^2-1} X  
    \end{bmatrix}.
\end{align*}
\begin{corollary}[Kalman rank condition]\label{cor:dt_obs}
   Given a discrete-time quantum system as in \eqref{eq:ct_sys}, \eqref{eq:q_sys}, with ${\rm dim}(\Hc)=d$, the system is observable if and only if $\rank(O_d)=d^2$.
\end{corollary}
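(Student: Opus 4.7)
The plan is to obtain this corollary as an essentially immediate consequence of Proposition~\ref{prop:dt_obs} combined with the vectorization identity \eqref{eq_vec_maps_a}, mirroring the argument sketched for the continuous-time case in Corollary~\ref{cor:ct_obs}. The whole point is that Proposition~\ref{prop:dt_obs} truncates the Krylov expansion of the observable subspace at depth $d^2-1$, so observability reduces to a finite-dimensional rank test on the columns of $\begin{bmatrix} X & \hat{\Phi}_1 X & \cdots & \hat{\Phi}_1^{d^2-1} X\end{bmatrix}$.

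First I would recall from property P.\ref{prop:P1} and the definition of $\hat{\Phi}_1$ given in \eqref{eq_vec_maps_a} that, for every $X_i\in\Xc$ and every $t\in\mathbb{N}$,
\begin{equation*}
    \hat{\Phi}_1^{\,t}\,x_i \;=\; \vect\bigl(\Phi_{t}(X_i)\bigr),
\end{equation*}
since the repeated application of the map $\Phi=\Phi_1$ corresponds, under vectorization, to the repeated action of $\hat{\Phi}_1$. Consequently the columns of $\hat{\Phi}_1^{\,t} X$ are exactly the vectorizations of the evolved observables $\{\Phi_{t}(X_i)\}_{i=1}^{l}$, and the columns of
\begin{equation*}
    O_d^\dag \;=\; \begin{bmatrix} X & \hat{\Phi}_1 X & \cdots & \hat{\Phi}_1^{d^2-1} X \end{bmatrix}
\end{equation*}
span the subspace $\vect(\Oc_{k^*})$, where $\Oc_{k^*}:={\rm span}\{\Phi_t(X_i),\,X_i\in\Xc,\,t\in\mathbb{N}_{\leq d^2-1}\}$.

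Next I would invoke Proposition~\ref{prop:dt_obs}, which guarantees the existence of an integer $k^*\leq d^2-1$ such that $\Oc=\Oc_{k^*}$; in particular, extending the Krylov depth beyond $d^2-1$ produces no new directions, because the Cayley--Hamilton theorem applied to $\hat{\Phi}_1\in\Cb^{d^2\times d^2}$ forces $\hat{\Phi}_1^{\,t}$ with $t\geq d^2$ to be a linear combination of the lower powers. Since vectorization is a linear isomorphism between $\Bc(\Hc)$ and $\Cb^{d^2}$, we have $\dim(\Oc)=\dim\bigl(\vect(\Oc)\bigr)=\rank(O_d)$.

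Finally, observability is by Definition~\ref{def:obs} the condition $\Oc=\Bc(\Hc)$, and since $\dim(\Bc(\Hc))=d^2$, combining the two chains of equalities yields
\begin{equation*}
\text{the system is observable}\iff \dim(\Oc)=d^2 \iff \rank(O_d)=d^2,
\end{equation*}
which is the claim. I do not foresee any real obstacle: the only subtlety worth being explicit about is that the column/row convention in the definition of $O_d$ is consistent with $\rank(O_d)=\rank(O_d^\dag)$, so that counting linearly independent columns of $O_d^\dag$ is the same as counting linearly independent rows of $O_d$, exactly as in the argument given for Lemma~\ref{lem:obs_matrix}.
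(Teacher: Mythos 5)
Your proposal is correct and follows essentially the same route as the paper: it combines Proposition~\ref{prop:dt_obs} (the Krylov truncation at depth $d^2-1$) with the Lemma~\ref{lem:obs_matrix}-style identification of the rows of $O_d$ (equivalently, columns of $O_d^\dag$) with the vectorized generators of $\Oc$, so that observability is equivalent to $\rank(O_d)=d^2$. Your explicit mention of Cayley--Hamilton and of the isomorphism property of vectorization merely fills in details the paper leaves implicit.
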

As for continuous time systems the proof immediately follows from  Proposition \ref{prop:dt_obs} by following a reasoning similar to the one in the proof of Lemma \ref{lem:obs_matrix}. The rows of $O_d$ are the vectorized generators of $\Oc$ (equation \eqref{eq:dt_obs}).

For discretized evolutions, as in Remark \ref{rem:discretized_ev}, it is possible to check observability by exploiting Proposition \ref{prop:dt_obs}. However, the following lemma directly links  Proposition \ref{prop:dt_obs} and Proposition \ref{prop:ct_obs}.

\begin{lemma}\label{lem:discretized_sys_obs}
    If every couple of distinct eigenvalues $\lambda_i, \lambda_j$ of $\Lc$ for which $\Re[\lambda_i]=\Re[\lambda_j]$ are such that $\Im[\lambda_i-\lambda_j]\neq 2 \pi s/\Delta t$ $\forall s \in \Nb$ then:
    \begin{equation}
   \Oc=\{\mathcal{L}^t{(X_i)}, \ \ \forall X_i\in \Xc, \forall t\in \mathbb{N}_{\leq k^*} \} = \Bc(\Hc)
    \end{equation}
   with $k^*\leq d^2-1$  implies the discretized system is observable.
\end{lemma}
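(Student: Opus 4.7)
The plan is to exploit the Popov-Belevitch-Hautus (PBH) eigenvector test for observability recalled in Appendix \ref{sec:PBH}, which characterizes observability of both the continuous-time generator $L$ and its discrete propagator $e^{L\Delta t}$ in terms of eigenvectors lying in the common kernel of the (vectorized) measurement functionals. I would reduce observability of the sampled system to the assumed continuous-time observability by proving, under the spectral separation hypothesis, that any PBH obstruction at the discrete level lifts to an obstruction at the continuous level.

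I would argue by contrapositive. If the discretized system with propagator $\hat{\Phi}_{\Delta t}=e^{L\Delta t}$ fails to be observable, PBH produces a nonzero eigenvector $v$ of $e^{L\Delta t}$ at some eigenvalue $\mu$ satisfying $x_i^\top v=0$ for every $X_i\in\Xc$. Since $\spec(e^{L\Delta t})=\{e^{\lambda\Delta t}:\lambda\in\spec(L)\}$, one has $\mu=e^{\lambda\Delta t}$ for some $\lambda\in\spec(L)$. The key elementary fact is that $e^{(\lambda_i-\lambda_j)\Delta t}=1$ if and only if $\Re[\lambda_i]=\Re[\lambda_j]$ and $\Im[\lambda_i-\lambda_j]\in (2\pi/\Delta t)\mathbb{Z}$, which the hypothesis rules out for distinct eigenvalues of $L$. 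Hence $\lambda\mapsto e^{\lambda\Delta t}$ is injective on $\spec(L)$ and $\lambda$ is the unique preimage of $\mu$.

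The core step is then to promote $v$ to a bona fide eigenvector of $L$ at $\lambda$. Decomposing $\Cb^{d^2}$ into the generalized eigenspaces of $L$, which are also invariant under $e^{L\Delta t}$, the injectivity above forces the nonzero components of $v$ to lie entirely in the generalized eigenspace $V_\lambda$. On $V_\lambda$ one can write $L=\lambda I+N$ with $N$ nilpotent, so $e^{L\Delta t}-\mu I=\mu\,\Delta t\,N\,M$ where $M=I+\tfrac{\Delta t}{2}N+\cdots$ is a polynomial in $N$, hence invertible and commuting with $N$. The equation $(e^{L\Delta t}-\mu I)v=0$ therefore becomes $NMv = MNv = 0$, and invertibility of $M$ yields $Nv=0$, i.e.\ $Lv=\lambda v$. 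Applying PBH to the continuous-time system now contradicts the assumed observability of $\Sigma$, completing the argument.

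The step I expect to demand the most care is this Jordan-block reduction: the spectral mapping theorem alone matches only \emph{generalized} eigenspaces, whereas PBH requires genuine eigenvectors, so the commutativity of $N$ and $M$ must be used explicitly in order to lift the eigenvector property from $e^{L\Delta t}$ back to $L$. Once that is in hand, the aliasing hypothesis on the imaginary parts of the spectrum is exactly what is needed to keep the preimage $\lambda$ of $\mu$ unique, and everything else is bookkeeping with PBH.
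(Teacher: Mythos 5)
Your argument is correct. Note, however, that the paper does not prove this lemma at all: it simply defers to \cite[Theorem 3.2.1]{chen1995sampled}, so there is no in-paper proof to compare against. What you have written is a self-contained proof of that sampled-data observability theorem, and it follows the standard route: contrapositive via the PBH eigenvector test, spectral mapping $\mu=e^{\lambda\Delta t}$, the non-aliasing hypothesis to make $\lambda\mapsto e^{\lambda\Delta t}$ injective on $\spec(\Lc)$ (the excluded case $s=0$ being automatic since distinct eigenvalues with equal real and imaginary parts coincide), and localization of the offending eigenvector to a single generalized eigenspace because $(e^{L\Delta t}-\mu I)$ is invertible on every $V_{\lambda'}$ with $e^{\lambda'\Delta t}\neq\mu$. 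The step you flag as delicate is indeed the one that is usually glossed over, and you handle it correctly: writing $e^{L\Delta t}-\mu I=\mu\,\Delta t\,NM$ with $M$ unipotent and commuting with $N$ lets you conclude $Nv=0$, so the generalized eigenvector is a genuine one and the continuous-time PBH test is violated, contradicting $\Oc=\Bc(\Hc)$. The only point worth making explicit in a final write-up is the adjoint bookkeeping: in the paper's vectorized system the state $r_0$ evolves under $\hat\Phi_{\Delta t}^\dag=e^{L^\dag\Delta t}$ with output rows $x_i^\dag$, so the PBH obstruction is an eigenvector of $L^\dag$; since the spectrum of $\Lc$ is closed under conjugation and the non-aliasing condition is conjugation-invariant, this changes nothing in substance.
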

\noindent Proof of this result can be found in \cite[Theorem 3.2.1]{chen1995sampled}.

\subsection{Feasibility of DQST with a single observable}\label{sec:feas_sing_obs}

What can we say about feasibility of DQST when a single observable is availble? In this section,  we recall and extend some known feasibility results on DQST for time-homogeneous Markovian systems.

The following result has previously been established for systems evolving under discrete-time unitary dynamics, also using tools from observability analysis in systems and control theory; see \cite[Proposition 1]{XS-WY:2024}. Here we extend the result to systems undergoing continuous-time unitary dynamics.

We highlight that the framework considered in \cite{XS-WY:2024} differs slightly from ours. In particular, the vectorization of the dynamics is performed with respect to a different operator basis, and the technical assumptions on the measurement operator set $\Xc$, introduced at the end of Section \ref{sec:notions_and_assumptions}, are not imposed. As a result, the conditions required for the measurement operator set slightly  differ from those in our framework. For this reason, we revisit and adapt the proof of the latter paper to our setting. This adaptation will play a key role in understanding some of the results presented later in the paper. 

\begin{proposition}\label{prop:DQST_feas_unitaries}
    Consider a time-homogeneous Markovian quantum system \eqref{eq:q_sys} undergoing continuous time or discrete-time unitary evolutions. Let $d$ be the dimension of the associated Hilbert space.
    Then DQST is not possible when $|\Xc|<d$.
\end{proposition}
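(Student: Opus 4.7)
The plan is to exhibit a $d$-dimensional subspace of $\Bc(\Hc)$ that is pointwise preserved by every unitary evolution, and then show that the projection of the observable subspace $\Oc$ onto it has dimension at most $|\Xc|$, so that $\Oc\neq\Bc(\Hc)$ whenever $|\Xc|<d$.

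First, I would fix an orthonormal basis $\{|j\rangle\}_{j=1}^{d}$ of $\Hc$ that diagonalizes $H$ in the continuous-time case, or the unitary propagator $U$ in the discrete-time case, and set $\Dc:=\Span\{|j\rangle\langle j|\}_{j=1}^{d}\subset\Bc(\Hc)$, which has dimension $d$. A direct computation in this basis gives
\[\bigl(\Phi_t(X_i)\bigr)_{jk}=e^{i\omega_{jk}t}(X_i)_{jk},\]
where $\omega_{jk}=\epsilon_j-\epsilon_k$ in the continuous-time case (with $\{\epsilon_j\}$ the eigenvalues of $H$) and the analogous phase difference in the discrete-time case. The key point is that $\omega_{jj}=0$ irrespective of any spectral degeneracies, so the orthogonal projection $\Pi_{\Dc}:\Bc(\Hc)\to\Dc$ satisfies $\Pi_{\Dc}(\Phi_t(X))=\Pi_{\Dc}(X)$ for every $X\in\Bc(\Hc)$ and every admissible $t$.

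Next, I would propagate this invariance to the observable subspace using linearity of $\Pi_{\Dc}$:
\[\Pi_{\Dc}(\Oc)=\Pi_{\Dc}\bigl(\Span\{\Phi_t(X_i):X_i\in\Xc,\,t\in\Tc\}\bigr)=\Span\{\Pi_{\Dc}(X_i):X_i\in\Xc\},\]
which has dimension at most $|\Xc|$. On the other hand, Proposition \ref{prop:obs_DQST} tells us that feasibility of DQST for every state is equivalent to $\Oc=\Bc(\Hc)$, and this forces $\Pi_{\Dc}(\Oc)=\Dc$, a space of dimension $d$. Combining the two bounds gives $d=\dim\Pi_{\Dc}(\Oc)\le|\Xc|$, contradicting $|\Xc|<d$ and ruling out feasibility.

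The argument is essentially linear-algebraic and I do not foresee any serious obstacle; the only step requiring a little care is verifying the invariance of $\Dc$ uniformly across the continuous- and discrete-time settings and in the presence of possible spectral degeneracies of $H$ (or $U$). Both are handled by the single observation $\omega_{jj}=0$, which holds independently of whether distinct basis vectors $|j\rangle,|k\rangle$ happen to share an eigenvalue, so the $d$-dimensional ``diagonal'' subspace $\Dc$ is always pointwise fixed by $\Phi_t$.
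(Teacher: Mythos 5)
Your proof is correct, and it takes a genuinely different route from the paper's. The paper vectorizes the dynamics, applies the PBH test (Proposition \ref{prop:pbh}) to $\hat{\Phi}_1=U^*\otimes U$, observes that this map has at least $d$ eigenvalues equal to $1$ so that $\rank(I-\hat{\Phi}_1)\leq d^2-d$, and concludes that at least $d$ independent output rows are needed; the continuous-time case is then handled by reduction to a suitably sampled discrete-time system via Lemma \ref{lem:discretized_sys_obs}. You instead work directly with the definition of $\Oc$: the $d$-dimensional ``diagonal'' subspace $\Dc$ in the eigenbasis of $H$ (or $U$) is pointwise fixed by the Heisenberg evolution, so $\Pi_{\Dc}(\Oc)=\Span\{\Pi_{\Dc}(X_i)\}$ has dimension at most $|\Xc|$, while observability would force it to be all of $\Dc$. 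The two arguments exploit the same underlying obstruction --- the fixed-point set of the adjoint action of a unitary has dimension at least $d$, and each element of $\Xc$ contributes at most one dimension to its image in $\Oc$ --- but yours avoids vectorization, Kronecker-product spectra, and the PBH machinery altogether, and, notably, treats continuous and discrete time uniformly without invoking the non-resonant sampling condition of Lemma \ref{lem:discretized_sys_obs}. The paper's formulation, in exchange, stays within the control-theoretic toolbox it develops (and reuses later, e.g.\ in the genericity result for $|\Xc|\geq d$). One cosmetic point: your invariance claim is about the specific evolution generated by the given $H$ or $U$ (the subspace $\Dc$ depends on its eigenbasis), not about ``every unitary evolution'' simultaneously; as written the computation makes this clear, but the opening sentence could be read otherwise.
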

\begin{proof}
    We start by considering systems undergoing discrete-time or sampled unitary dynamics. We let $U$ be a unitary matrix with eigenvalues $\{e^{i{\phi_i}}\}_{i=1}^d$ and $\Phi(\cdot)= U \cdot U^\dag$ be the unitary evolution map. 
    For the following of the analysis it is convenient to consider the vectorized system \eqref{eq:dis_sys_vect_n}, where the vectorized evolution map is $\hat{\Phi}_1=U^*\otimes U,$ where $U^*$ is the entry-wise complex-conjugate of the matrix $U$. Moreover, we let $X^\dag$ be the matrix whose rows are the conjugate-transpose of the  vectorized measurement operators in $\Xc$.
    
    To asses observability of the vectorized system we can exploit the PBH test for observability, as presented in  \ref{sec:PBH}.
     The system is observable if and only if
     $$ {\rm rank}\left(\begin{bmatrix}
         \lambda I -\hat{\Phi}_1\\
         X^\dag
     \end{bmatrix}\right)=d^2 \quad \forall \lambda \in \mathbb{C}.$$
    
    In view of the properties of Kronecker product, it follows that $\Phi_1$ has eigenvalues which are given by the set $\{e^{-i\phi_i}e^{i\phi_j}\}_{i,j=1}^d$ and at least $d$ eigenvalues of $\Phi_1$ are 1.

    Then, for $\lambda=1$, $\rank(\lambda I-\hat{\Phi}_1)=d^2-d$. Therefore, in order to satisfy the PBH condition for observability, $X$ must have at least $d$ linearly independent rows, equivalently $\Xc$ must contain at least $d$ linearly independent operators, this leads to the claim of the proposition for discrete-time evolutions.

    We now prove the claim for continous-time unitary evolutions. 
    By contradiction suppose the system has a continuous-time dynamics, undergoes unitary evolution, $|\Xc\setminus I|<d-1$ and the system is observable.
     Let $\Lc(\cdot)=i[H,\cdot]$ be the Lindblad generator of the continuous semigroup of unitary maps describing the evolution of the system.   Then it is always possible to find a corresponding system with discretized evolution (see Remark \ref{rem:discretized_ev}) and sampling time  $\Delta t$ such that every couple of distinct eigenvalues $\lambda_i, \lambda_j$ of $\Lc$ for which $\Re[\lambda_i]=\Re[\lambda_j]$ are such that $\Im[\lambda_i-\lambda_j]\neq 2 \pi s/\Delta t$ $\forall s \in \Nb$. Then, since we are supposing the continuous time systems is observable (i.e. $\Oc=\Bc(\Hc)$), by Lemma \ref{lem:discretized_sys_obs} also the discretized system with sampling time $\Delta t$ must be observable. However this contradicts the result on observability for systems undergoing discrete-time/discretized evolutions and concludes the proof.
\end{proof}
An interesting special case is when the set of available measurement operators contains only one operator in addition to the identity. The following result has been proven in the literature  with different techniques. In particular \cite[Corollary 2]{HR-MW:2025},\cite[Proposition 1]{XS-WY:2024}, \cite{merkel2010random} consider  time-homogeneous Markovian discrete-time (or discretized) closed quantum systems. Here, we extend the result to time-homogeneous Markovian quantum systems with continuous-time dynamics. 
\begin{corollary}
    Consider a time-homogeneous Markovian quantum system \eqref{eq:q_sys} undergoing continuous-time or discrete-time unitary evolutions. Let $d>2$ be the dimension of the associated Hilbert space.
    Then DQST is not possible when $|\Xc\setminus I|=1$.
\end{corollary}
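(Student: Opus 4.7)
The plan is to observe that this corollary is essentially an immediate specialization of Proposition \ref{prop:DQST_feas_unitaries}, so no new machinery is needed — the work amounts to bookkeeping on the cardinality of $\Xc$. Recall the standing convention from Section \ref{sec:notions_and_assumptions}: the identity operator $I$ always belongs to $\Xc$, and $\Xc$ is assumed to consist of linearly independent operators. Under these conventions, the hypothesis $|\Xc\setminus I|=1$ immediately gives $|\Xc|=2$.

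Next, I would invoke Proposition \ref{prop:DQST_feas_unitaries}, which asserts that for time-homogeneous Markovian unitary dynamics (continuous- or discrete-time) on a $d$-dimensional Hilbert space, DQST is not feasible whenever $|\Xc|<d$. The assumption $d>2$ yields $|\Xc|=2<d$, so the hypothesis of the proposition is satisfied and the conclusion — infeasibility of DQST — follows directly.

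For pedagogical completeness, I would also briefly recall the underlying mechanism so that the corollary stands on its own: the vectorized unitary propagator $\hat\Phi_1=U^*\otimes U$ (or, in continuous time, the commutator generator $L_H=i(I\otimes H-H^\top\otimes I)$) has an eigenvalue at $1$ (respectively at $0$) of algebraic multiplicity at least $d$. By the PBH test for observability (Appendix \ref{sec:PBH}), the matrix $X^\dag$ whose rows are the vectorized elements of $\Xc$ must contribute at least $d$ linearly independent rows at that eigenvalue. Since $|\Xc|=2<d$, this is impossible.

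The main obstacle is essentially nonexistent: all the substance has been absorbed into the preceding Proposition \ref{prop:DQST_feas_unitaries}, and the only subtlety here is making sure the bookkeeping between $|\Xc|$ and $|\Xc\setminus I|$ is handled correctly under the standing conventions. For this reason the proof can be written in a couple of lines.
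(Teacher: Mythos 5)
Your proposal is correct and matches the paper's own proof, which likewise derives the corollary as an immediate specialization of Proposition~\ref{prop:DQST_feas_unitaries} using $|\Xc|=2<d$. The extra paragraph recalling the PBH mechanism is accurate but not needed, since all the substance is already in the cited proposition.
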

\begin{proof}
    This follows trivially from the previous proposition by considering $d>2$ and $|\Xc\setminus I|$=1.
\end{proof}

If however $d$ linearly independent observables, including the identity, are available, a generic unitary evolution  is enough for DQST of any state. This fact is proved in the following result.

\begin{proposition}
 Consider a time-homogeneous Markovian quantum system \eqref{eq:q_sys} undergoing continuous-time or discrete-time evolutions. If $|\Xc|\geq d$ then DQST is feasible for a generic unitary evolution.
\end{proposition}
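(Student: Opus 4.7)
The plan is to verify the Popov--Belevitch--Hautus (PBH) observability criterion invoked in the proof of Proposition~\ref{prop:DQST_feas_unitaries} by exhibiting spectral conditions on the unitary generator that are generically satisfied. First, I would diagonalize: write $U\ket{k}=e^{i\phi_k}\ket{k}$ in the discrete-time case and $H\ket{k}=\phi_k\ket{k}$ in the continuous-time case. The rank-one operators $E_{kl}=\ket{k}\bra{l}$ are then eigen-operators of $\Phi_1^\dag$ (respectively of $\Lc^\dag$), with eigenvalues $e^{i(\phi_l-\phi_k)}$ (respectively $i(\phi_l-\phi_k)$). For a generic choice of phases the differences $\phi_l-\phi_k$ with $k\neq l$ are pairwise distinct and nonzero, while the eigenvalue $1$ (discrete case) or $0$ (continuous case) has algebraic multiplicity exactly $d$ with eigenspace $\Span\{E_{kk}\}_{k=1}^d$.

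Second, by PBH the system fails observability iff some nonzero eigen-operator $Z$ of the generator satisfies $\tr(X_i^\dag Z)=0$ for every $X_i\in\Xc$. Under the generic spectral hypothesis above, this splits into two conditions on the eigenbasis: \textbf{(a)} for each pair $k\neq l$ there exists $X_i\in\Xc$ with $\bra{l}X_i\ket{k}\neq 0$; and \textbf{(b)} the linear map $\pi:\Span(\Xc)\to\Cb^d$ sending each operator to its diagonal in the eigenbasis is surjective, which since $|\Xc|\geq d$ is equivalent to the non-vanishing of some $d\times d$ minor of the matrix $(\bra{k}X_i\ket{k})_{k,i}$. Both (a) and (b) are polynomial non-vanishing conditions in the entries of the basis-defining unitary $V\in U(d)$, and I would check that neither polynomial is identically zero: for (a), $\Xc$ contains at least one non-scalar operator, and a suitable basis makes a chosen off-diagonal entry nonzero; for (b), a nonzero operator cannot have vanishing diagonal in every basis, so given $|\Xc|\geq d$ linearly independent operators one can exhibit a witness basis in which $\pi$ is surjective.

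Combining the three generic conditions -- phase non-degeneracy, (a), and (b) -- shows that the set of unitaries violating observability is contained in a finite union of proper real-algebraic subvarieties of $U(d)$, hence has Haar measure zero, which is the meaning of ``generic'' here. The continuous-time claim follows either by repeating the argument verbatim (the eigen-operator decomposition of $\Lc$ is structurally identical) or by reducing to the discrete-time case through Lemma~\ref{lem:discretized_sys_obs} with a generic sampling time $\Delta t$. The main technical obstacle is condition~(b): verifying that the associated determinant polynomial is not identically zero requires leveraging the linear independence of $\Xc$ together with the basis-flexibility of the diagonal map, which I would handle by exhibiting an explicit witness basis (for instance a small perturbation of any basis diagonalizing a chosen element of $\Xc$) rather than relying only on a generic dimension count.
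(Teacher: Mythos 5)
Your proposal follows essentially the same route as the paper: diagonalize the unitary, apply the PBH test to the vectorized dynamics, observe that for a generic choice of phases the only degenerate eigenvalue of $U^*\otimes U$ is $1$ with $d$-dimensional eigenspace $\Span\{E_{kk}\}$, and conclude via a Haar-measure-zero argument over the unitary group. Your condition (b) --- surjectivity of the diagonal map $\pi$ on $\Span(\Xc)$, to be certified by a witness basis --- is in fact a more precise formulation of the step the paper disposes of by asserting that $d$ linearly independent rows of $\tilde X^\dag$ with no zero entries make the PBH matrix full rank at $\lambda=1$, so the extra care you flag there is warranted rather than redundant.
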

\begin{proof}
Using the PBH criterion (Proposition \ref{prop:pbh}) for the vectorized system, the statement is equivalent to the following:
Let $X^\dag$ be a matrix whose rows are the conjugate transpose $d$ vectorized linearly independent operators in $\Bc(\Hc)$ (including the identity), then the set
\begin{equation}
    \Sc=\Bigl\{U\in\Uc(d) \ | \exists \lambda \ \textrm{s.t.} \ {\rm rank}\begin{bmatrix}
        (\lambda I- U^*\otimes U) \\
        {X}^\dag
    \end{bmatrix}<d^2 \Bigr\}
\end{equation}
has zero Haar measure on $\Uc(d)$.   
In the following, generic means up to a set of zero Haar measure. Let us consider a change of basis $V$ in the original space so that the chosen $U$ is diagonal, so it becomes a matrix of phases $D_U=\textrm{diag}(e^{i\phi_1},\ldots, e^{i\phi_n})=VUV^\dag$. Being $U$ generic, all the entries of the corresponding change of basis $V$ are non-zero. Being the change of basis invertible, the matrix $\tilde X$ containing the linearized version of the $d$ observables in the new basis is still full rank, and it is generically a full matrix with no zero entries, since $V$ is such and the $i$-th row of $\tilde{X}^\dag$ can be computed using the vectorization properties as  ${\rm vec}(X_i)^\dag (V^*\otimes V)^\dag,$ where $X_i$ are the $d$ linearly independent measurement operators. Now the PBH test  reads   
\begin{equation}
     \begin{bmatrix}
        (\lambda I- D_U^*\otimes D_U) \\
        \tilde X^\dag
    \end{bmatrix}.
\end{equation}
The upper block of the matrix is clearly rank-deficient, as it has $d$ eigenvalues in $1.$ It cannot however have eigenvalues that are more than $d$-degenerate, since the phases in $D_U$ are generic. For this reason,  $d$ linearly independent rows with nonzero entries of$\tilde X^\dag,$ make the full matrix full rank for all $\lambda.$ 
\end{proof}
 It is worth noting, however, that any additional constraint on the dynamics - such as locality - may cause the above to fail, since the $U$ would not be generic anymore.

\vspace{3mm} On the other hand, if we consider time-homogeneous Markovian {\em open} quantum systems with general continuous-time or discrete-time dynamics as described in the previous section, DQST is possible even when a single measurement operator is available. 

The following result has been proven in \cite[Theorem 4 and Corollary 4]{HR-MW:2025}. 

\begin{proposition} Consider time-homogeneous Markovian open quantum with continuous-time or discrete-time dynamics. If $|\Xc\setminus I|=1$ the set of maps $\Phi$ for which any state of the system can not be reconstructed via DQST is a null set.
\end{proposition}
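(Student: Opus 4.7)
The plan is to reduce the claim to a genericity statement about the rank of the observability matrix, and then to verify it by exhibiting a single example and invoking a standard analyticity/measure-zero argument. Throughout, we parametrize CPTP maps smoothly (e.g.\ via Kraus operators or, in the continuous-time case, via the coefficients of $H$ and the noise operators $\{L_k\}$ entering the Lindbladian \eqref{eq:lindblad}), so that the vectorized propagator $\hat{\Phi}$ (resp.\ the generator matrix $L$ defined in \eqref{eq:L}) depends polynomially on finitely many real parameters.

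First, by Corollary \ref{cor:dt_obs} (resp.\ Corollary \ref{cor:ct_obs}), with $\Xc=\{I,X\}$ observability is equivalent to
$$\rank\begin{bmatrix} \vect(I)\ |\ \vect(X)\ |\ \hat{\Phi}\vect(X)\ |\ \cdots\ |\ \hat{\Phi}^{d^2-1}\vect(X)\end{bmatrix}^{\!\dag}=d^{2},$$
and analogously with $L$ in place of $\hat{\Phi}$. The failure of this rank condition is the simultaneous vanishing of all $d^{2}\times d^{2}$ minors of the observability matrix, each of which is a polynomial in the chosen parameters. Hence the ``bad'' set is a real-algebraic variety in parameter space; by the standard fact that a nonzero real-analytic function vanishes on a Lebesgue-null set, it suffices to exhibit a single choice of parameters for which \emph{at least one} of those minors is nonzero, i.e.\ a single $\Phi$ (or $\Lc$) for which the system is observable with the given $X$.

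Second, I would construct such an example by choosing a generic Hermitian $X\in\mathfrak{H}_0(\Hc)$ and a purely dissipative Lindbladian whose single noise operator $L_1$ does not commute with $X$ and has non-degenerate, non-resonant spectrum. A direct computation shows that the iterated commutator/anticommutator structure in \eqref{eq:L} then produces a sequence $\{\Lc^{t}(X)\}_{t=0}^{d^{2}-1}$ whose span, together with $I$, fills $\Bc(\Hc)$; this is essentially the observation that for a generic pair $(X,L_1)$ the Krylov subspace generated by $L$ acting on $\vect(X)$ is of maximal dimension $d^{2}-1$ on the traceless sector. For the discrete-time case the same idea applies by picking $\Phi$ sufficiently close to $e^{\Lc\Delta t}$ and invoking Lemma \ref{lem:discretized_sys_obs} to transfer observability from the continuous-time generator to its sampled propagator, avoiding the resonance condition on eigenvalues of $\Lc$.

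The main obstacle is the explicit construction of an observable example: unlike the unitary case treated in Proposition \ref{prop:DQST_feas_unitaries}, here we must genuinely use the non-unital/irreversible structure of $\Lc$ to break the $d$-fold degeneracy of the eigenvalue $1$ of $\hat{\Phi}$ that forced the lower bound $|\Xc|\ge d$ in the unitary case. Once a single dissipative example is produced—for instance, the qubit amplitude-damping generator with $X=\sigma_z$, which a short calculation shows is observable from $\{I,\sigma_z\}$ alone, and its $n$-fold tensor extensions whose observability from a single multi-qubit observable is the subject of Appendix \ref{sec:nqubitDQST}—the polynomial-nonvanishing argument of the first step upgrades this isolated witness to the claimed full-measure genericity on the space of CPTP (resp.\ Lindblad) maps.
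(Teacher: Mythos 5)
Your high-level skeleton---reduce failure of observability to the simultaneous vanishing of all maximal minors of a Krylov/observability matrix, note these are polynomial (hence real-analytic) in the parameters, and upgrade a single observable witness to full-measure genericity---is sound, and it is exactly the machinery the paper packages in Lemma \ref{lem:matrix_gen} and Proposition \ref{prop:matrix_gen}. Be aware, though, that the paper does not actually prove this proposition: it is quoted from \cite{HR-MW:2025} (Theorem 4 and Corollary 4), and the paper's own in-house contribution is the distinct Proposition \ref{prop:quibits_observability}, proved in Appendix \ref{sec:nqubitDQST} only for $d=2^N$ and purely dissipative dynamics.

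The genuine gap is in your witness, which is the crux of the whole argument. The qubit amplitude-damping generator with $X=\sigma_z$ is \emph{not} observable from $\{I,\sigma_z\}$: taking $L=\ketbra{0}{1}$, a direct computation gives $\Lc(\sigma_z)=2\ketbra{1}{1}=I-\sigma_z$, while $\Lc(\sigma_x)=-\tfrac12\sigma_x$ and $\Lc(\sigma_y)=-\tfrac12\sigma_y$, so the Krylov space generated from $\sigma_z$ is $\Span\{I,\sigma_z\}$ and never reaches $\sigma_x,\sigma_y$. A similar problem affects your general-$d$ suggestion: a single \emph{normal} noise operator $L_1=\sum_i\lambda_i\ketbra{i}{i}$ with non-degenerate spectrum yields a Heisenberg generator that acts diagonally on the basis $\ketbra{i}{j}$ with eigenvalue $\bar\lambda_i\lambda_j-\tfrac12\left(|\lambda_i|^2+|\lambda_j|^2\right)$, which vanishes for every $i=j$; the resulting $d$-fold degeneracy of the zero eigenvalue forces, via the PBH test, at least $d$ observables---exactly the obstruction of the unitary case you were trying to escape. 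The paper's Appendix \ref{sec:nqubitDQST} shows what a correct witness looks like: $H=0$, a \emph{generic full-rank diagonal} GKS matrix (i.e., $d^2-1$ independent dissipation channels, not one), and an observable $X$ whose Pauli expansion has \emph{all} coefficients nonzero---sparse observables such as $\sigma_z$ are precisely the ones that fail. If you replace your example with that construction (or simply invoke the cited result of \cite{HR-MW:2025} for arbitrary $d$), the measure-zero part of your argument goes through.
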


\vspace{2mm} In the appendix we follow an alternative approach, showing that DQST for multi-qubit systems undegoing continuous-time open quantum systems is in general possible even when a single measurement operator (in addition to the identity) is available. Our result is summarized by the following proposition.

\begin{proposition} \label{prop:quibits_observability}
    Consider a system of dimension $2^N.$ There exist time-homogeneous, {\em purely} dissipative Markovian continuous-time dynamical generator for which DQST is possible when $\Xc\setminus I$ includes a single generic observable. 
\end{proposition}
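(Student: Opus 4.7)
The aim is to reduce the claim to the continuous-time Kalman rank condition of Corollary~\ref{cor:ct_obs}: it suffices to exhibit one purely dissipative Lindblad generator $\Lc$ on the $N$-qubit algebra $\Bc(\Hc)\simeq\Bc((\Cb^2)^{\otimes N})$, together with one nontrivial observable $X$, such that the continuous-time observability matrix $O_c$ has full rank $4^N$. The ``generic observable'' part of the statement is then handled by the standard polynomial-zero-set argument: viewing $\det(O_c^\dag O_c)$ as a polynomial in the Pauli-expansion coefficients of $X=\sum_{\alpha\neq 0} c_\alpha P_\alpha$, a single nonzero evaluation forces full-measure genericity. The parametric-dynamics result announced in Section~\ref{sec:parametric} plays the parallel role on the dissipator side, allowing the generator strengths themselves to be treated as generic.

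The construction I would use works natively in the Pauli basis $\{P_\alpha\}_{\alpha\in\{0,1,2,3\}^N}$, with two families of noise operators: local amplitude-damping jumps $L_k=\sqrt{\gamma_k}\,\sigma^-_k$ on each site, and two-body exchange jumps $L_{k\ell}=\sqrt{\gamma_{k\ell}}\,\sigma^+_k\sigma^-_\ell$ along a connected coupling graph over the qubits. A direct computation shows that each such $L$ maps a Pauli string $P_\alpha$ to a short linear combination of Pauli strings differing from $P_\alpha$ by a localized ``edit'': swapping a $\sigma_z$ factor with an $I$ on one site, or moving a non-identity factor between two graph-neighboring sites. Iterated application of $\Lc$ therefore propagates these local edits throughout the index set $\{0,1,2,3\}^N$, and the Krylov subspace $\Span\{X,\Lc(X),\ldots,\Lc^{4^N-1}(X)\}$ becomes a concrete object to analyze site by site.

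The core argument---and the main obstacle---is to show that these edits actually exhaust $\Bc(\Hc)$ without cancellation. I would split this into two steps. First, a combinatorial \emph{connectivity} step establishes that the single-step transitions $\alpha\leftrightarrow\beta$ induced by $\Lc$ form an irreducible graph on $\{0,1,2,3\}^N$, so every Pauli string is algebraically reachable in finitely many steps from any string appearing in $X$. Second, a \emph{non-cancellation} step shows that the observability determinant is a nonzero polynomial in the parameters $\gamma_k$, $\gamma_{k\ell}$, and $c_\alpha$. The cleanest way I see to produce one explicit nonzero value is induction on $N$: the base case $N=1$ is a small explicit computation, and the inductive step uses a single coupling dissipator $L_{N-1,N}$ to lift an observability certificate on $N-1$ qubits to one on $N$ qubits, by enlarging both the generator and the observable on the new site and checking that every Pauli string with a prescribed letter on site $N$ is spanned. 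This step is where the subtlety lies, since different paths through the transition graph can in principle produce cancelling contributions; treating all coefficients as generic (indeterminates in a polynomial) is what makes the non-cancellation transparent and closes the proof.
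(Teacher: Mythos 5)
Your plan has a concrete, fatal obstruction before one even reaches the two steps you flag as subtle. The generator you propose --- local jumps $L_k=\sqrt{\gamma_k}\,\sigma^-_k$ and exchange jumps $L_{k\ell}=\sqrt{\gamma_{k\ell}}\,\sigma^+_k\sigma^-_\ell$, with real rates and no Hamiltonian --- is covariant under the one-parameter group $\mathrm{Ad}_{U_\theta}$ with $U_\theta=e^{i\theta\sum_k\ket{1}\bra{1}_k}$, since each jump operator is mapped to itself up to a phase. Hence $\Lc$ preserves the charge decomposition $\Bc(\Hc)=\bigoplus_q\Bc_q$. Because $\Lc$ is Hermiticity-preserving, an eigenvector $v\in\Bc_q$ with eigenvalue $\lambda$ yields the eigenvector $v^\dag\in\Bc_{-q}$ with eigenvalue $\bar\lambda$; and because all rates are real (no Hamiltonian), the matrix of $\Lc|_{\Bc_q}$ in the monomial basis is real, so its spectrum is conjugation-invariant. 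Combining the two, every eigenvalue carried by a sector with $q\neq0$ has geometric multiplicity at least $2$ (either $v$ and $v^\dag$ for real $\lambda$, or $v$ and $(\bar v)^\dag$ for complex $\lambda$). By the PBH test this rank deficiency of $2$ at a nonzero $\lambda$ cannot be repaired by the single row contributed by one observable, so \emph{no} choice of rates or of $X$ makes your system observable. The failure is already visible at $N=1$: amplitude damping gives $\Lc(\sigma_x)=-\tfrac{\gamma}{2}\sigma_x$ and $\Lc(\sigma_y)=-\tfrac{\gamma}{2}\sigma_y$, a degenerate pair, so your base case does not hold. Independently of this, the connectivity and non-cancellation steps are only announced, not carried out, so even with a symmetry-breaking dissipator added the argument would remain incomplete.

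For contrast, the paper sidesteps all of this by taking the jump operators to be (scaled) Pauli strings, i.e.\ a diagonal GKS matrix in the Pauli basis with $H=0$. Then $\Lc$ itself is \emph{diagonal} in the Pauli basis with real eigenvalues $g_n=\sum_i a_{ii}((-1)^{p_{n,i}}-1)d$ and Hermitian eigenvectors, so no conjugation degeneracy can occur; a short linear-algebra argument shows the $g_n$ are generically pairwise distinct, and the PBH test then reduces to requiring that the observable have all Pauli coefficients nonzero, which holds generically. If you want to salvage a physically motivated construction along your lines, you must at minimum break the $U(1)$ covariance (e.g.\ include $\sigma_x$-type jumps), and you would still need to rule out the residual real-eigenvalue/Hermitian-conjugate degeneracies that the paper's diagonal construction avoids by design.
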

Purely dissipative here denotes a Lindblad generator whose Hamiltonian can be chosen to be zero.
The proof, presented in Appendix \ref{sec:nqubitDQST}, exploits observability analysis and the structure of the pauli basis: thanks to the structure of the latter, it provides some insights on the structure of observables and dynamics that allow for DQST.

\subsection{Feasability of DQST for multipartite systems}\label{sec:multi}
We now focus on multipartite systems with  \textit{homogeneous} subsystems. Namely, the system is composed of $N$ subsystems having the same dimension $k$. The Hilbert space associated to the system is therefore $\Hc=\bigotimes_{q=1}^n \Hc_q$ where $\Hc_q=\Cb^k$. 
We assume to have access to observables that act non-trivially only on a single subsystem, i.e. $X\in \Xc$ only if $X=X_q\otimes I_{\bar{q}}$ where $X_q\in\Hf(\Hc_q)$ and $\bar{q}$ denotes all subsystems but the one indexes by $q$. We will call the latter a covering \textit{single-site} measurement.

The following corollary gives a necessary condition for observability that relies on the number of subsystems $N$ and their dimension $k$.
\begin{proposition}
    Consider an homogeneous multipartite quantum system with associated Hilbert space $\Hc=\bigotimes_{q=1}^N \Hc_q$, $\Hc_q=\Cb^k$ and all single-site measurement operators. The system, if undergoing unitary dynamics, is observable only if 
    $${k^N}\leq Nk^2-N-1.$$
\end{proposition}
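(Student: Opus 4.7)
The plan is to combine a linear-algebraic count of the covering single-site measurement set with the necessary condition from Proposition~\ref{prop:DQST_feas_unitaries}, which for unitary dynamics demands $|\Xc|\geq d$, with $d=k^N$ here. The whole argument reduces to determining $|\Xc|$ carefully and substituting into that bound, together with a brief reuse of the PBH argument in the multipartite setting.

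First I will fix on each subsystem $q$ an orthogonal Hermitian basis $\{I_q, T_q^{(1)}, \dots, T_q^{(k^2-1)}\}$ of $\Hf(\Hc_q)$, consisting of the local identity together with $k^2-1$ traceless operators. The covering single-site observable set then consists of the global identity $I = I_q \otimes I_{\bar q}$, which is the same operator regardless of which site $q$ is used to embed it, together with the traceless embeddings $T_q^{(\alpha)} \otimes I_{\bar q}$ for $q=1,\dots,N$ and $\alpha=1,\dots,k^2-1$. Linear independence is then established via Hilbert-Schmidt orthogonality: factorising $\tr\!\bigl((A\otimes I_{\bar q})^\dagger (B\otimes I_{\bar p})\bigr)$ across tensor factors shows that inner products between distinct embedded traceless operators vanish (on distinct sites at least one factor is traceless, and on a common site the local basis is orthogonal by construction), while each such embedded operator is orthogonal to $I$ by the traceless condition.

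Next I will invoke Proposition~\ref{prop:DQST_feas_unitaries}: since the dynamics is unitary (continuous- or discrete-time), the PBH analysis at the $d$-fold eigenvalue $\lambda=1$ of $\hat\Phi_1 = U^{*}\otimes U$ forces $|\Xc| \geq k^N$ independent rows of $X^\dagger$. Substituting the count from the previous step into this inequality and rearranging yields the stated necessary condition. Observability is thus ruled out as soon as the multipartite count of single-site operators falls below the global Hilbert-space dimension.

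The main obstacle is not the inequality chain itself, which is a one-line substitution, but rather the bookkeeping in the count: one must ensure that the global identity is included exactly once despite being realised as $I_q\otimes I_{\bar q}$ on every site, and confirm that the Hilbert-Schmidt orthogonality handles both intra-site and inter-site pairings uniformly, so that Proposition~\ref{prop:DQST_feas_unitaries} applies to the full set of linearly independent rows of $X^\dagger$ with the correct constant. Any sharpening needed to match the precise form of the stated bound would come from a refined look at how the identity row and the zero-trace constraints on the $T_q^{(\alpha)}\otimes I_{\bar q}$ project into the $d$-dimensional kernel of $I-\hat\Phi_1$ in the $U$-eigenbasis.
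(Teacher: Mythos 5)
Your argument is essentially the paper's own proof: count the dimension of the span of covering single-site observables as $Nk^2-(N-1)$ (the global identity, shared across all sites, plus $N(k^2-1)$ embedded traceless local operators, with linear independence via Hilbert--Schmidt orthogonality exactly as you describe), then impose the necessary condition $|\Xc|\ge d=k^N$ from Proposition~\ref{prop:DQST_feas_unitaries}. Note, however, that this chain yields $k^N\le Nk^2-N+1$, not the stated $k^N\le Nk^2-N-1$; the paper's proof performs the identical count and substitution, so the discrepancy of $2$ is present there as well and appears to be a slip in the statement rather than a sharpening you failed to supply. Your closing suggestion that a more refined PBH analysis could recover the missing $-2$ will not work: for a nondegenerate unitary the rows of $X^\dagger$ must project onto a spanning set of the $d$-dimensional commutant of $U$, the identity accounts for one of those dimensions and the traceless operators must supply the remaining $d-1$, which reproduces precisely $N(k^2-1)\ge k^N-1$, i.e.\ the $+1$ bound again. (Both versions of the inequality are consistent with the consequences drawn in the subsequent Remark, namely that the condition always holds for $N=2$ and fails for qubit networks with $N\ge 4$.)
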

\begin{proof}
    The system is observable only if $\Xc\setminus I$ includes $k^N-1$ linearly independent operators. Note that a basis for  single-site operators consists of $k^2$ elements. Thus, a basis for the available measurement operators $\Xc$ consists of $Nk^2-(N-1)$ elements. 
     The statement follows by applying the necessary condition for observability in Proposition \ref{prop:DQST_feas_unitaries}.
    \end{proof}

    \begin{remark}
        Notice that for bipartite quantum systems (N=2) the above necessary condition is always satisfied. On the other hand, for multi-qubit systems (k=2) the condition above is satisfied only up to N=3. This implies that DQST for unitary dynamics on large qubit networks is {\em never} possible using only single-site observables. This clearly highlights the importance of being able to exploit dissipative dynamics: Appendix \ref{sec:nqubitDQST} shows that a class of purely dissipative dynamics guarantee feasibility of DQST with even a single observable.
    \end{remark}

\subsection{Parametric Dynamics and Genericity of Observability}\label{sec:parametric}

In this section, we consider time-homogeneous Markov dynamics with associated generator $\Lc,$ which depends analytically on a finite number of parameters $\alpha \in \Rb^K$.   We indicate such generator as $\Lc_\alpha,$ with associated Hamiltonian and noise operators  $H_\alpha$ and  $\{L_{k,\alpha}\}$. The system selected by $\alpha$ will be denoted as $\Sigma_\alpha$, $\Sigma_{v,\alpha}$ will be its vectorized version. The non observable subspace for the system will be labeled as $\Nc_{\alpha}$.
We next recall a lemma that will be exploited to prove the main result of this section.

Consider an $m\times n$ matrix $A_\alpha = [f_{jk}(\alpha)]$, with $f_{jk} : R^K \mapsto \Cb$, such that its real and imaginary parts $\Re(f_{jk}),\Im(f_{jk})$ are (real)-analytic, and let $\mathsf{r} =    {\rm max}_{\alpha\in\Cb^K} {\rm rank}(A_\alpha)$.  
We have the following lemma \cite{ticozzi2013steadystate}:
\begin{lemma}\label{lem:matrix_gen}
    The set $\Ac=\{\alpha\in\Rb^K | \rm{rank}(A_\alpha)<\mathsf{r}\}$ is such that $\mu(\Ac)=0$, where $\mu$ is the Lebesgue measure in $\Rb^K$.
\end{lemma}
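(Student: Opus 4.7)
The plan is to reduce the rank condition to the simultaneous vanishing of a collection of minors, each of which is a real-analytic function of $\alpha$, and then to invoke the classical fact that the zero set of a non-identically-zero real-analytic function on $\Rb^K$ has zero Lebesgue measure. I would proceed as follows.

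First, I would recall that $\rank(A_\alpha)<\mathsf{r}$ if and only if every $\mathsf{r}\times \mathsf{r}$ submatrix of $A_\alpha$ has vanishing determinant. Enumerate these submatrices and denote the corresponding minors by $\{M_\ell(\alpha)\}_{\ell=1}^{L}$. By the Leibniz formula, each $M_\ell(\alpha)$ is a polynomial in the entries $f_{jk}(\alpha)$. Since the real and imaginary parts of each $f_{jk}$ are real-analytic on $\Rb^K$, and polynomials in real-analytic functions are real-analytic, each $M_\ell$ can be written as $M_\ell(\alpha)=P_\ell(\alpha)+iQ_\ell(\alpha)$ with $P_\ell,Q_\ell:\Rb^K\to\Rb$ real-analytic. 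In particular,
\begin{equation*}
\Ac=\bigcap_{\ell=1}^{L}\bigl\{\alpha\in\Rb^K \,\big|\, P_\ell(\alpha)=0 \text{ and }Q_\ell(\alpha)=0\bigr\}.
\end{equation*}

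Second, I would exploit the definition of $\mathsf{r}$. Because $\mathsf{r}$ is the maximal rank achieved by $A_\alpha$, there exists at least one $\alpha^*$ and at least one index $\ell^*$ such that $M_{\ell^*}(\alpha^*)\neq 0$. Therefore at least one of $P_{\ell^*}$ or $Q_{\ell^*}$ is not identically zero on $\Rb^K$; call it $F$. Clearly $\Ac\subseteq F^{-1}(0)$.

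Third, I would appeal to the standard result (e.g. via the identity theorem for real-analytic functions combined with Fubini) that the zero set of a non-identically-zero real-analytic function $F:\Rb^K\to\Rb$ has Lebesgue measure zero. This gives $\mu(F^{-1}(0))=0$ and hence $\mu(\Ac)=0$ by monotonicity of the measure.

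The only subtle point, and the one I would treat with care, is handling the complex-valued nature of the minors: a priori $M_\ell\equiv 0$ is weaker than $\Re M_\ell\equiv 0$ and $\Im M_\ell\equiv 0$ separately, but the decomposition above shows that non-vanishing of $M_{\ell^*}$ at a single point forces at least one of its real or imaginary parts to be a non-trivial real-analytic function, which is all we need to conclude. Everything else is bookkeeping, and no sophisticated machinery beyond the measure-zero property of the zero set of a nonzero real-analytic function is required.
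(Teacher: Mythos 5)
Your proof is correct; note that the paper does not give its own proof of this lemma but imports it from \cite{ticozzi2013steadystate}, and your argument is exactly the standard one underlying that reference: reduce rank deficiency to the simultaneous vanishing of all $\mathsf{r}\times\mathsf{r}$ minors, use maximality of $\mathsf{r}$ to extract one minor whose real or imaginary part is a non-identically-zero real-analytic function on $\Rb^K$, and conclude via the measure-zero property of zero sets of non-trivial real-analytic functions. Your explicit handling of the complex-valued minors (non-vanishing at a single point forces at least one of $\Re M_{\ell^*}$, $\Im M_{\ell^*}$ to be non-trivial) closes the only subtle gap, so nothing is missing.
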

The following proposition ensures that if there exists a choice of parameters that make the system observable, then almost all of them will:
\begin{proposition} \label{prop:matrix_gen} Let $H_\alpha$, $L_{k,\alpha} \ \forall k$  be matrices such that 
each of their entries has both real and imaginary parts which are analytic functions on the parameter $\alpha\in \Rb^K.$ If $\exists \ \hat{\alpha}$ such that the system $\Sigma_{\hat{\alpha}}$ is observable, then the set $\Ac=\{\alpha \in \Rb^K \ | \ \Nc_{\alpha}\neq\emptyset \}$ is such that $\mu(\Ac)=0$.
\end{proposition}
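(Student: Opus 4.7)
The plan is to reduce the statement to a direct application of Lemma \ref{lem:matrix_gen} via the Kalman rank characterization of observability. First, I would recall that by Corollary \ref{cor:ct_obs}, the system $\Sigma_\alpha$ is observable if and only if the continuous-time observability matrix $O_c(\alpha)$ defined in \eqref{eq:ct_obs_m} has rank $d^2$. Here $O_c(\alpha)$ is built from $L(\alpha)=L_H(\alpha)+L_d(\alpha)$ as in \eqref{eq:L}, which depends on $H_\alpha$ and $\{L_{k,\alpha}\}$.

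Next I would verify the analyticity hypothesis needed to invoke Lemma \ref{lem:matrix_gen}. By assumption, the real and imaginary parts of every entry of $H_\alpha$ and $L_{k,\alpha}$ are real-analytic functions of $\alpha\in\Rb^K$. Since sums, products, conjugates, transposes, and Kronecker products preserve analyticity entry-wise, each entry of $L_H(\alpha)$ and $L_d(\alpha)$ has analytic real and imaginary parts, and hence so does every entry of the powers $L(\alpha)^{\dag s}$ for $s=0,\dots,d^2-1$. The blocks $L(\alpha)^{\dag s} X$ in \eqref{eq:ct_obs_m} involve multiplication by the fixed matrix $X$ (independent of $\alpha$), so $O_c(\alpha)$ is a matrix whose entries are analytic in $\alpha$ in the required sense.

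I would then apply Lemma \ref{lem:matrix_gen} to $A_\alpha = O_c(\alpha)$. By hypothesis, $\Sigma_{\hat{\alpha}}$ is observable, so $\mathrm{rank}(O_c(\hat{\alpha}))=d^2$. Since $d^2$ is the maximum possible rank of $O_c(\alpha)$ (it cannot exceed its number of columns), this forces $\mathsf{r}=\max_{\alpha\in\Cb^K}\mathrm{rank}(O_c(\alpha))=d^2$. The lemma then yields that the set $\{\alpha\in\Rb^K\mid \mathrm{rank}(O_c(\alpha))<d^2\}$ has zero Lebesgue measure. By Corollary \ref{cor:ct_obs}, this set coincides with $\Ac=\{\alpha\in\Rb^K \mid \Nc_\alpha\neq\{0\}\}$, yielding the claim.

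I do not expect a major obstacle: the argument is essentially a dictionary translation between observability and rank of a parametrized analytic matrix, so the only point requiring care is to check that the construction $H_\alpha,L_{k,\alpha}\mapsto O_c(\alpha)$ preserves entry-wise analyticity, which follows from elementary closure properties of analytic functions under the algebraic operations appearing in \eqref{eq:L} and \eqref{eq:ct_obs_m}.
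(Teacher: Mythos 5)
Your proposal is correct and follows essentially the same route as the paper's own proof: vectorize, form the observability matrix, observe that its entries are analytic in $\alpha$ by closure of analyticity under the algebraic operations in \eqref{eq:L} and \eqref{eq:ct_obs_m}, and apply Lemma \ref{lem:matrix_gen} using the fact that the maximal rank $d^2$ is attained at $\hat{\alpha}$. Your version is in fact slightly more careful in spelling out why the maximum rank equals $d^2$ and in reading $\Nc_\alpha\neq\emptyset$ as $\Nc_\alpha\neq\{0\}$, which is the intended meaning.
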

\begin{proof}
    As in the proof of the previous Proposition 1, we can always vectorize the system $\Sigma_{\hat{\alpha}}$ to obtain
    $\Sigma_{v,\hat\alpha}$. Since the system is observable, the  observability
    matrix ${O}_{\hat{\alpha}}$ has full rank which is equal to 
    $D^2$. The imaginary and real parts of  ${O}_{\hat{\alpha}}$
    are analytic functions on the variable $\alpha$
    since they are obtained by the sum, multiplication, exponentiation of the
    entries of $H_\alpha$ and $L_{k,\alpha}$, these are all operations which
    preserve analyticity. The set $\{\alpha \in \Rb^K \ | \ {\rm rank}( {O}_{\hat{\alpha}})<D^2\}$
    corresponds exactly to the set $\Ac$ and the fact that $\mu(\Ac)=0$ follows
    from Lemma \ref{lem:matrix_gen}.
\end{proof}
This proposition then suggests that, in studying observability, we can arbitrarily set the parameter values. If the system is not observable and there exists a choice of parameters that makes it observable, by changing the values at random, we shall find a set of parameters which guarantees observability with probability one. On the other hand, if the parameters are unknown, the observability of the true dynamics is almost certainly guaranteed by that of a system with randomly chosen parameters.

An equivalent result can be derived in the same way for discrete dynamics, considering parametric Kraus maps.

\section{Reconstructing expectations of\\ unavailable observables}\label{sec:obs_exp_reconstr}
A closely related problem to state tomography is the problem of reconstructing expectation of a set of observables that we are not directly able to measure on the system, more formally:
\begin{problem}\label{prob:part_tom}
    Given a set of observables that can be measured on the system $\Xc=\{X_i\}_{i=0}^l$, a set of target observables ${\Zc}=\{Z_i\}_{i=0}^q$ that can not be directly  measured and $\rho\in \Dc(\Hc)$, predict the expectation values
    \begin{equation*}
z_i=\Eb_\rho(Z_i)=\tr({Z_i\rho}), \qquad  1 \leq i  \leq q.\end{equation*}
\end{problem}
If the system undergoes a known dynamics, it is possible to consider the problem of predicting the expectation of observables in $\Zc$ on the initial state of the system $\rho_0$, i.e. $z_i=\tr(Z_i\rho_0)$. In this section we give necessary and sufficient conditions for the feasibility of the previous problem when $\rho=\rho_0$. The conditions we will give rely on observability analysis in control theory.

If the system is observable, DQST is possible for every state and $\rho_0$ can be reconstructed. In this case, the expectation of observables in $\Zc$ can be directly computed by exploiting the reconstructed state $\rho_0$.
However, as we will show next, observability is only a sufficient but not necessary condition for the reconstruction of expectations of observables in $\Zc$. 
\begin{proposition}\label{prop:feas_obs_exp_reconstr}
    Consider the system $\Sigma$ in \eqref{eq:q_sys} and a set of target observables $\Zc=\{Z_i\}_{i=0}^q$. The expectations of observables in $\Zc$ on the initial state of the system are reconstructible if and only if
    $\Zc \subseteq \Oc$, 
    where $\Oc$ is the observable subspace, as introduced in Definition \ref{def:obs}.
\end{proposition}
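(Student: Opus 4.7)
The plan is to prove both implications through the orthogonal decomposition $\Bc(\Hc)=\Oc\oplus\Nc$ with respect to the Hilbert--Schmidt inner product, exactly paralleling the logic employed in Propositions~\ref{prop:obs_UDDA} and~\ref{prop:obs_DQST}.

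For the sufficiency direction, I would start from $\Zc\subseteq\Oc$ and invoke the very definition of $\Oc$ as the linear span of the evolved observables: each target $Z_i$ then admits a finite expansion $Z_i=\sum_k c_{i,k}\,\Phi_{t_k}(X_{j_k})$, whose coefficients are computable offline from the (assumed known) dynamics, e.g.\ by solving a linear system in a precomputed basis of $\Oc$. Taking the trace against $\rho_0$ and using linearity then yields
\begin{equation*}
z_i=\tr(Z_i\rho_0)=\sum_k c_{i,k}\,y_{j_k}[t_k],
\end{equation*}
so each $z_i$ is reconstructed as an explicit linear combination of the measured expectations.

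For necessity I would argue by contrapositive: assume some $Z_i\in\Zc$ fails to lie in $\Oc$ and exhibit two states producing identical measurement trajectories yet disagreeing on $z_i$. Decompose $Z_i=Z_i^{\Oc}+Z_i^{\Nc}$ with $Z_i^{\Nc}\neq 0$; note that $Z_i^{\Nc}$ is automatically Hermitian (because $Z_i$ and the generators of $\Oc$ are), and traceless (because $I\in\Xc\subseteq\Oc$ is orthogonal to $\Nc$). Mimicking the perturbative construction in the proof of Proposition~\ref{prop:obs_UDDA}, I would take a full-rank state such as $\rho_0=I/d$ and set $\sigma_0=\rho_0+\epsilon Z_i^{\Nc}$ with $\epsilon>0$ small enough that $\sigma_0\in\Dc(\Hc)$. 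Since $\sigma_0-\rho_0\in\Nc$, the two states are dynamically indistinguishable and produce the same data, whereas
\begin{equation*}
\tr\bigl(Z_i(\sigma_0-\rho_0)\bigr)=\epsilon\,\tr(Z_i Z_i^{\Nc})=\epsilon\,\tr\bigl((Z_i^{\Nc})^2\bigr)\neq 0,
\end{equation*}
the cross term $\tr(Z_i^{\Oc}Z_i^{\Nc})$ vanishing by orthogonality. Hence no function of the data can return $z_i$ unambiguously.

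The conceptual core is the orthogonal-decomposition argument; the only real technicality is ensuring that $\sigma_0$ remains a legitimate density operator, which is handled exactly as in Proposition~\ref{prop:obs_UDDA} by choosing $\rho_0$ strictly positive and $\epsilon$ sufficiently small (so that $\sigma_0\succeq 0$ and $\tr(\sigma_0)=1$, the latter being automatic from tracelessness of $Z_i^{\Nc}$). This is the step I would double-check carefully but expect to cause no genuine difficulty.
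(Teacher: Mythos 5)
Your proof is correct and follows essentially the same route as the paper's: sufficiency via expanding each $Z_i$ in the spanning set of evolved observables and taking traces, and necessity via the perturbative construction $\sigma_0=\rho_0+\epsilon Z_i^{\Nc}$ of a dynamically indistinguishable state that disagrees on $\tr(Z_i\,\cdot)$. Your explicit checks that $Z_i^{\Nc}$ is Hermitian and traceless (so that $\sigma_0\in\Dc(\Hc)$ for small $\epsilon$) are points the paper's proof glosses over, but the argument is the same.
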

\begin{proof}
We recall that in the setting considered in this paper, we can only estimate the expectation of measurement operators $\Xc$ evolved at times $\Tc$, i.e. the expectations of the set of operators $\Rc=\{X_i[t_i]\}$, where $X_i\in \Xc, t_i\in \Tc$. Let $y_i[t_i]=\Eb_{\rho_0}(X_i[t_i])$, we label as $y_\Rc=[y_1[t_1], \dots ]$ the vector of the latter expectations.
We first prove sufficiency of the condition in the statement.
If each target observable $Z_j\in \Zc$ is a linear combination of evolved measurement operators $\Rc$, i.e. 
\begin{equation}
    Z_j=\sum_i \alpha_{i,j} X_i[t_i],
\end{equation}
then its expectation $z_j$ can be found as 
\begin{equation*}
    z_j=\tr(Z_j\rho_0)=\sum_i \alpha_{i,j} \tr(X_i[t_i] \rho_0)=\sum_i \alpha_{i,j} y_i[t_i].
\end{equation*}
The above sufficient condition is equivalent to requiring 
\begin{equation*}
    Z_j \in \textrm{span}\{X_i[t_i] \ | \ X_i\in X_i, t_i\in \Tc \},
\end{equation*}
where the above linear span coincides with the observable subspace $\Oc$ defined in Section \ref{sec:dqst_feas}.

We now prove necessity by contradiction. Suppose by contradiction $\Zc \not \subseteq \Oc$ and it is possible to reconstruct the expectations of observables in $\Zc$ on the initial state. Then  there exists $Z_j\in \Zc$ which is not a linear combination of operators in $\Rc$, i.e. 
\begin{equation*}
       Z_j=\sum_i \alpha_{i,j} X_i[t_i] + \Bar{X}_j,
\end{equation*}
where $\Bar{X}_j$ is a (traceless) operator in $\Oc^\perp=\Nc$. Now we consider two initial states, a full rank state $\rho\in \Oc$ and $\sigma=(\rho+\epsilon \Bar{X_j})$, where $\epsilon>0$ is a scalar small enough such that $\sigma$ is positive semi-definite and therefore belong to $\Dc(\Hc)$. 
Then 
\begin{gather*}
    \tr(Z_j \rho)=\tr(\sum_i \alpha_{i,j} X_i[t_i] \rho) \\
   \neq
   \\
        \tr(Z_j \sigma)=\tr(\sum_i \alpha_{i,j} X_i[t_i] \rho) +\epsilon^2 \tr(\Bar{X}_j^2).
\end{gather*}
However, since $\bar{X}_j\in \Nc$ 
\begin{equation*}
    \tr(X_i[t_i] \sigma)=\tr(X_i[t_i] \rho)+\tr(X_i[t_i] \bar{X_j})=\tr(X_i[t_i] \rho)
\end{equation*}
$\forall X_i[t_i]\in \Rc$. Therefore the set of collected expectations of evolved observables always coincides for the initial states $\rho,\sigma$. This implies it impossible to distinguish the expectation of $Z_j$ for $\sigma$ and $\rho$ (which do not coincide) from estimated expectations.
\end{proof}
\begin{remark}
     The problem considered in this section is closely related to the shadow-tomography problem introduced in \cite{aaronson2018shadow, huang2020predicting}, namely the problem of predicting some properties of the system (such as observables expectations) without necessary reconstructing the full state of the system. A full comparison and a reinterpretation of shadow tomography from a dynamical viewpoint is beyond the scope of this work, and will be explored elsewhere \cite{dynamicalshadowtomography}.
\end{remark}

\section{ Quality of Dynamical Tomography via Linear Regression}\label{sec:quality}

\subsection{Mean square error analysis} In this section, we partially follow reference \cite{Qi_2013}, moreover we make the following assumption: we assume the considered quantum systems are observable.

While in the previous sections we assumed to have perfect knowledge of the expectations of observables in $\Xc$, this is never possible in practice due to the lack of knowledge on the actual system's state. 
As discussed in Appendix \ref{sec:empav}] an estimate $\hat{y}_i[t]$ of the expectation $y_i[t]$ of an observable $X_i$ at time $t$, can be obtained by averaging the outcomes of multiple measurements of $X_i[t]$ performed on identically prepared instances of the system.

If the system is observable, any initial state $\vv{r}_0$ can then be reconstructed by following Proposition \ref{prop:obs_DQST} as
\begin{align}\label{eq:state_estimate}
    \hat{\vv{r}}_0=({O}_\Rc^\dag {O}_\Rc)^{-1} {O}_\Rc^\dag \hat{y}_\Rc.
\end{align}
Notice that in the previous equation, with respect to Proposition \ref{prop:obs_DQST}, we have substituted the vector of actual expectations ${y}_\Rc$ with $\hat{y}_\Rc$, which is the vector containing the estimate of the components $y_i[t]$ of ${y}_\Rc$. Since in practice, to  estimate each expectation $y_i[t]$, we can rely on a finite number of measurements outcomes ($P$), we will always have some errors in the corresponding estimate $\hat{y}_i[t]$.
 The presence of errors in the estimates can be described by the following equation
\begin{align}
    \hat{y_i}[t]=y_i[t]+e_i[t],
\end{align}
where $e_i[t]$ is a stochastic process. The distribution of $e_i[t]$ can be deduced as follows. The estimate $\hat{y}_i$ is computed as  
\begin{align}
    \hat{y}_i[t]=\frac{v_{i,1}[t]+v_{i,2}[t]+\dots+v_{i,P}[t]}{P}
\end{align}
where the $v_{i,k}[t]$ are identically distributed random variables of mean $\tr[X_i[t] \rho]$ and finite variance $\sigma^2=\tr[X_i^2[t]\rho_0]-\tr[X_i[t]\rho_0]^2$. Each of these variables describe a single measurement on an instance of the system. 
 When $P\rightarrow\infty$, by the central limit theorem ${\hat{y}}_i$ converges in distribution to a normal with mean $\tr[X_i[t] \rho_0]=\vv{r}_0^\dag x_i[t]=y_i[t]$ and variance $\sigma_i^2[t]/P$.
This implies the measurement error $e_i[t]=\hat{y}_i[t]-y_i[t]$, converges in distribution to a normal with mean 0 and variance $\sigma_i^2[t]/P$. Since every measurement is performed on a different instance of the system, then $\mathbb{E} [e_i[t] e_i[s]]=0 \ \forall s \neq t$.

To evaluate the quality of the described DQST via {linear regression protocol}, it is possible to compute the Mean Squared Error (MSE) \cite{Qi_2013} between the actual (vectorized) state of the system $\vv{r}_0$ and its estimate $\hat{\vv{r}}_0$. 
The MSE in the described setting can be computed as
\begin{align*}
    \Eb[(\vv{r}_0\!-\!\hat{\vv{r}}_0)^\dag (\vv{r}_0\!-\!\hat{\vv{r}}_0)]\!=\!\frac{\tr[({O}_\Rc^\dag {O}_\Rc)^{-1} {O}_\Rc^\dag\Sigma {O}_\Rc({O}_\Rc^\dag {O}_\Rc)^{-1}]}{P},
\end{align*}
where $\Sigma=\, \diag(\sigma^2_0,\,\dots,\, \sigma_i^2,\, \dots) $. Note that each element of $\Sigma$ depends on $\rho_0$. However, we can derive an upper bound for the MSE which does not depend on the specific state of the system. Specifically, let  
\begin{align*}
    k=~\max_{\rho_0\in \Dc(\Hc),\, X_i\in \Xc}
\sigma_i^2,
\end{align*}
then 
\begin{equation*}
    \Eb[(\vv{r}_0-\hat{\vv{r}}_0)^\dag (\vv{r}_0-\hat{\vv{r}}_0)]\leq \frac{k}{P} \tr{[({O}_\Rc^\dag {O}_\Rc)^{-1}]}.
\end{equation*}
This upper bound on the MSE suggest that, independently on the actual state of the system, it is possible to improve the estimate of $\rho_0$ by:
\begin{enumerate}
    \item Performing a large number of experiments, i.e. by increasing the value of $P$.
    \item  By properly selecting the set of observables $\Rc$ to be measured on the system.   
\end{enumerate}

\subsection{Selection of observables and measurement times}\label{sec:alg}
According to the second point above, in order to find the minimal MSE for the protocol, we can search among all the sets of evolved observables $\Rc$ 
the sets $\Rc^*$ solving the optimization problem
\begin{align}\label{eq:opt_func} \Rc^*=\underset{\Rc}{\rm argmin} \frac{k}{P} \tr{[({O}_\Rc^\dag {O}_\Rc)^{-1}]}.
\end{align}

However, minimization of the functional in \eqref{eq:opt_func} is impractical, as we need to be able to evaluate every evolved observable corresponding to $\Xc$ at all time instants $\Tc$ and to consider in $\Rc$ every possible combination of such observables. Similar problems arise in trying to optimize standard tomographic protocols; see \cite{liang2023optimizing}. Furthermore, notice that while $\Xc$ is a set of finite cardinality, $\Tc$ may have infinite cardinality. 
In view of the difficulty of solving \eqref{eq:opt_func}, we propose an heuristic iterative procedure for determining the observables to be considered in $\Rc$.

The identification of a set $\Rc$ of small cardinality is fundamental to reduce the number of experiment and the resources required for state reconstruction. The idea behind the procedure  is therefore to maximize the information that can be acquired with a single measurement of an observable in $\Rc,$ {\em given that a number of previous measurements have already been chosen or performed}. This allows for an iterative optimization procedure, which might be suboptimal in general. 

In the following, we label with $\Rc_k$ the set of measurement operators found in steps $1,\dots, k$ and with $X^k$ the observable that is chosen in the $k$-th step of the procedure. The procedure involve the following steps.

\smallskip{\em Algorithm for Observable and Time selection (AOT):}

\smallskip \noindent In the \textit{1-st step}, without prior knowledge on which are the most informative observables, we select $X^1 \in \Xc$ at random and set $\Rc_1=X^1$.

\smallskip \noindent In the \textit{$k$-th step}, with $k>1$, we select the observable $X^k=X_i^*[t^*]$, with $X_i^*\in \Xc$, $t^*\in \Tc$, with maximum projection on $\textrm{span}\{\Rc_{k-1}\}^\perp$. More formally, let $\Pi_{\Rc^\perp}$ be the orthogonal projector onto $\textrm{span}\{\Rc_{k-1}\}^\perp$, for all $X_i\in \Xc$ we first find 
\begin{align}\label{eq:time_cost_fcn}
    t^*_i=\underset{t\in\Tc}{\textrm{argmax}} \norm{\Pi_{\Rc^\perp} X_i[t]}^2_{HS},
\end{align}
where $\norm{\cdot}_{HS}$ is the Hilbert-Schmidt norm. Successively we set
\begin{align}\label{eq:ops_cost_fcn}
X^k=\underset{X_i[t_i^*]}{\textrm{argmax}} \norm{\Pi_{\Rc^\perp} X_i[t_i^*]}^2_{HS}, 
\end{align}
and $\Rc_k=\Rc_{k-1}\cup X^k$.
These steps select the observable and the measurement time that maximizes the amount of ``new'' information, that is, orthogonal with respect to the already available one.

Finally, we \textit{stop} the procedure when $k=d^2-1$. 

\smallskip It is worth noting that, in most applications, it is not possible to let the system evolve for long time periods before performing measurements. It is therefore meaningful to restrict the search space for time instants to the set  $\Tc_{\leq T}=\{t\in\Tc | t\leq T\},$ further reducing the resources required to compute the solution to the observable selection problem. 

Although this procedure does not lead, in general, to the optimal solution of the minimization problem described above, it is a viable proxy in many practical situations. In Appendix \ref{sec:numerical_ev_alg} we provide numerical evidence that our heuristic algorithm yields good approximations to the optimal solution, while
a discussion of its numerical complexity is included in Appendix \ref{sec:computational_considerations}.

\begin{remark}[Time versus observables tradeoffs]
    We conclude the section by highlighting an emerging trade-off between the number of available, independent observables and the time needed to complete the estimation. 
    If $\Xc$ is a set of informationally complete observables, the state of the system can be reconstructed without the need to exploit the system's dynamics: all $d^2-1$ observables can be measured at time $t=0$. 
    However, we may decide to avoid using some elements of $\Xc,$ maybe because their measurements are hard, costly or lengthy to implement. Then, one or more observables needs to be measured at multiple times, at least one of which will be  greater than zero\footnote{While in principle it may be made arbitrarily small for continuous dynamics, its signal to noise ratio will be very poor - a lot of measurements would be needed to have a good estimation.}. This measurement has also to be repeated multiple times to obtain a reliable estimation of the expectation (see Appendix \ref{sec:empav}).
   Therefore,  a trade-off emerges between the number of observables in the set $\Xc$, the quality of the estimation, and the total time required to collect measurement outcomes from the physical experiment. 
\end{remark}

\section{Example I -- A chain of 4 spins}

\begin{figure}[th]
    \centering
\includegraphics[width=0.9\linewidth]{./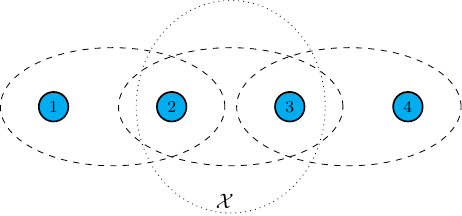}
\caption{Example 1: The 4-spin chain considered in Example I. The dotted line indicate the subsystems on which we are allowed to perform arbitrary measurements, contained in $\Xc$.
The dashed lines group the subsystems that are directly interacting dynamically.}
\label{fig:neigh_structure}
\end{figure}

\subsection{Problem setting}
In this section, we employ observability analysis to study the feasibility of DQST for some continuous-time Markovian quantum systems. {Different cases will provide insights on different aspects of the problem.}

As first example, we consider a multipartite quantum system composed of 4 qubits disposed on a line, the Hilbert space associate to the system is $\Hc=\bigotimes_{q=1}^4\Hc_q, \ \Hc_q=\Cb^2, \ \Hc \simeq \Cb^{16}$. 
Each spin interacts only with his nearest-neighbors in the line. We further assume it is possible to perform measurements on the joint system composed by the second and third spin, i.e. $\Xc = \{\sigma_2^u\sigma_3^q,\,\forall u,q=0,x,y,z\}$ where the notation $\sigma_j^\alpha$ denotes the Pauli operator $\sigma^\alpha, \alpha \in \{0,x,y,z\}$ acting on the j-th qubit, that is, $\sigma_j^\alpha \equiv I \otimes \cdots \sigma^\alpha \otimes \cdots I$ (similarly, in the following we will denote as $\sigma_j^+$ and $\sigma_j^-$ the raising and lowering operators acting on the  j-th qubit). The system is depicted in Figure \ref{fig:neigh_structure}.  
\subsection{Observability analysis}
Different evolutions of the system of interest are considered, and an observability analysis is carried out numerically for each scenario. The software \texttt{Matlab} is employed to compute the matrices $L$ and $X$ associated, respectively, with the generator of the dynamics and output maps, as defined in Section \ref{subsec:ct_dyn_obs}. Subsequently, the continuous-time observability matrix \eqref{eq:ct_obs_m} is constructed, and the Kalman rank condition was applied to determine whether the system is observable. We recall that
DQST is feasible if and only if the system is observable, 
as proved in Proposition~\ref{prop:obs_DQST}.

{\it Hamiltonian Dynamics:}  We begin by considering a purely unitary evolution, in which the system's Hamiltonian captures the interactions between neighboring spins and is given by
\begin{equation}\label{eq:sys_ham}
    H=\sum_{i=1}^4 \alpha_i\sigma_i^x + \beta_i \sigma_i^y+ \gamma_i \sigma_i^z + \sum_{i=1}^3 \delta_i \sigma_i^x\sigma_{i+1}^x+\epsilon_i \sigma_i^z\sigma_{i+1}^z,
\end{equation}
where the coefficients $\alpha_i,\beta_i,\gamma_i, \delta_i, \epsilon_i \in \Rb$ can be chosen in different ways.
When all the coefficients are set to 1, the system is not observable; in particular, the non-observable subspace has dimension 10. According to the results of Proposition~\ref{prop:matrix_gen}
, drawing \emph{all} parameters at random allows us to investigate whether the lack of observability is generic for the considered system. To this end, we performed observability analysis on 100 systems, with all parameters independently sampled from a Gaussian distribution with mean 0 and variance 1. None of these systems resulted to be observable. This indicates that the lack of observability does not depend only on the fact we set all the parameters equal to 1 but is an intrinsic property of the considered family of Hamiltonians.

{\it Dissipative Dynamics:} We consider now a second possible dynamics for the system which encompasses a dissipative term. We choose a single, local Lindblad generator with noise operator
$$L=4 \sigma_4^{-}$$
The Hamiltonian part of the dynamics is the same as before, with all the coefficients set to 1.
In contrast to the previous scenario the system is observable. Therefore, the addition of noise terms is beneficial for observability, as expected in light of the results in Section 
\ref{sec:feas_sing_obs}.
\subsection{Measurement time optimization}
Since the considered system is continuous-time Markovian, we can run the algorithm AOT described in Section \ref{sec:quality} to find the pairs of measurement operators in $\Xc$ and times in $\Tc$ identifying a set $\Rc$ of operators that can be measured on the system to solve DQST. The AOT is run only for the system in Scenario 2 as observability, is a requirement for the AOT. 
The AOT was implemented in \texttt{Matlab}, in particular the optimization of the functional in \eqref{eq:time_cost_fcn} was carried out using the \texttt{fmincon} matlab routine with initial time parameter equal to $1e-6$, moreover the search space for the optimizer was constrained to the positive real line.
In contrast, the search space for the optimization problem in  \eqref{eq:ops_cost_fcn} is a discrete set of relatively small cardinality, therefore the problem was solved in a combinatorial way by comparing the cost function for all evolved operators.

In Figure \ref{fig:time_opt} are depicted the couples of measurement operators indexes and evolution times found by the AOT. We observe that all the measurement operators in the set $\Xc$ are chosen multiple times. Only the operator with index 1 is chosen a single time, this index however correspond to the identity operator which is a fixed point for the dynamics. We observe that collecting a single outcome of an experiment to perform DQST do not require waiting large time as the maximum found time is $t_{max}\approx 12$. It is interesting to compare this with the slowest decay time of the dissipative dynamics, that is {$\tau \approx 1/|Re{(\lambda_2)}|$}, where $\lambda_2$ is the non-zero eigenvalue of the Lindblad operator with largest real part. For the considered system $\lambda_2=-0.1308$ and $\tau\approx 7,64$. Hence, the largest time selected by the AOT is under $2\tau.$

To test the proposed method for DQST, we consider three different initial states:
\begin{itemize}
    \item A pure separable state $\rho_S=\ket{0 0 0 0}\bra{0 0 0 0}$;
    \item A pure entangled state (GHZ state) $\rho_{GHZ}=\ket{\Psi}\bra{\Psi}$ with $\ket{\Psi}=(\ket{0 0 0 0}+\ket{1111})/\sqrt{2}$;
    \item A thermal Gibbs state $\rho_{Gibbs}=e^{-\beta{H}}/\tr(e^{-\beta H})$ where $H$ is the Hamiltonian of the system defined in equation \eqref{eq:sys_ham} and we chose $\beta=1$.
\end{itemize}
Let $\rho$ be equal to $\rho_S, \rho_{GHZ}$ or $\rho_{Gibbs}$, as described in section \ref{sec:quality},  as $N\rightarrow\infty$, by the central limit theorem the estimate ${\hat{y}}_i$ of expectation values of observable $R_i\in \Rc$ converges in distribution to a normal with mean  $y_i=\tr[R_i \rho]$ and variance $\sigma_i/N=(\tr[X_i^2\rho]-\tr[X_i\rho]^2)/N$. Therefore, to generate each of the estimates $\hat{y}_i$ we drew a sample from the corresponding distribution $\Nc(y_i,\sigma^2_i/N)$. Let $\hat{y}_{\Rc}=[y_1 \ y_2 \ \dots]$, we finally computed the estimate $\hat{\rho}$ of $\rho$ by following the expression in {equation \eqref{eq:state_estimate}}. 
We repeated the test for different values of $N$ and computed the squared error $\varepsilon_\rho^2=\tr({(\rho-\hat{\rho})^\dag(\rho-\hat{\rho})})$ describing the accuracy of the estimate. The results of the experiment are depicted in Figure \ref{fig:sq_error_scaling_state}.  $\varepsilon^2_\rho$ decreases linearly with the number $N$ of collected measurement outcomes for each of the considered initial states.

\begin{figure}
\centering
\includegraphics[width=\linewidth]{./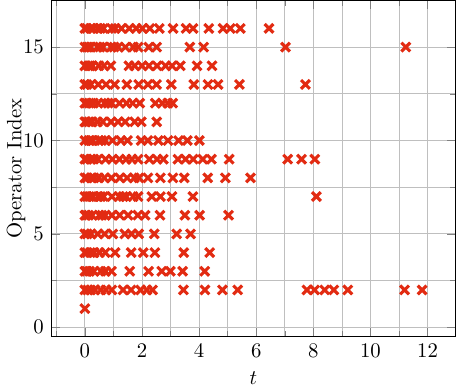}
\caption{\textit{Example 1 Scenario 2 Dissipative Dynamics:} Pairs of times and measurement operators to perform DQST identified by the AOT described in section \ref{sec:quality} {\cite{data_av}}.}
\label{fig:time_opt}
\end{figure}

\begin{figure}
\centering
\includegraphics[width=\linewidth]{./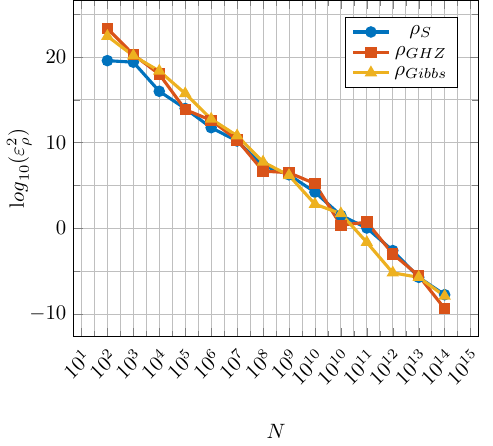}
\caption{Squared error ($\varepsilon^2_\rho$) scaling in function of the number $N$ of instances of the system involved in computing the estimate $\hat{\rho}$ of the state $\rho$ of the system. We considered three states $\rho_S, \rho_{GKS}$ and $\rho_{Gibbs}$ as described in the main text {\cite{data_av}}.}
\label{fig:sq_error_scaling_state}
\end{figure}

\section{Example II -- Coupled electron-nuclear systems}

We consider a bipartite quantum system consisting of nuclear and electronic degrees of freedom, motivated by Nitrogen-Vacancy (NV) defect center in diamond~\cite{Jacques09,enhancedNV,Jiang09,Neumann10b}. Although both electronic and nuclear spins (for the $^{14} N$ isotopes) are spin-1 systems with three levels, we consider a simplified description. Reduced models like the one considered here are commonly used when external control fields can manipulate only interactions among two of the three available levels. In the reduced model, we assume that both the nuclear and electronic degrees of freedom are represented as spin-1/2 particles therefore $\Hc=\bigotimes_{q=1}^2\Hc_q, \ \Hc_1=\Cb^4,\ \Hc_2=\Cb^2, \ \Hc \simeq \Cb^{8}$. The~system~is~depicted~in~Figure~\ref{fig:nv}.

\begin{figure}[h!]
    \centering
\includegraphics[width=0.5\linewidth]{./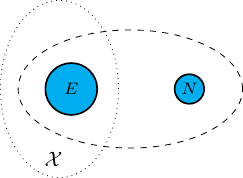}
\caption{Example 2: coupled electron (E) nuclear (N) system. The dotted line indicates the subsystem on which the measurement operators act nontrivially $\Xc$.
}\label{fig:nv}
\end{figure}

A basis for the system's state space is given
by the eight states
$$\ket{E_{el},s_{el}}\otimes\ket{s_N} \equiv
|E_{el},s_{el},s_N\rangle.$$
In the equation above, $\ket{E_{el}, s_{el}}$ represents the electronic degrees of freedom, characterized by the energy level $E_{el} = g, e$ (ground and excited states) and the electron spin $s_{el} = 0, 1$ (corresponding to spin-up and spin-down, respectively). The nuclear spin is denoted by $\ket{s_N}$ and can take the values $s_N = 0, 1$.

We assume that the system is subject to a continuous-time Markovian evolution. The electronic state can undergo a spin-preserving transition from its ground state to an excited state via optical pumping. Furthermore, the electron and the nuclei interact with each other via an Hamiltonian coupling. 
We describe the overall optically-pumped dynamics of the NV system by constructing a QDS generator as follows. Let $H_e$ and $H_g$ be the excited-state and ground-state Hamiltonian which share the same structure, the total Hamiltonian of the system is $H_{tot}=H_g+H_e$ where:
\begin{align}\label{eq:nv_hamiltonian}
    H_{g,e}&\!=\!D_{g,e} S_z^2\otimes \ones_N\!+Q\ \ones_{el}\otimes
S_z^2\nonumber \\ 
&+ B\,( g_{el} S_z \otimes \ones_N+g_n
\ones_{el}\otimes S_z )\\
&+\frac{A_{g,e}}2 ( S_x\otimes S_x +S_y
\otimes S_y+2S_z \otimes S_z )\nonumber.
\end{align}
In the above equation $S_{x,y}=\sigma_{x,y}$ are Pauli matrices on the relevant subspace, $S_z=\frac{1}{2}(\ones-\sigma_z)$ and $D_{g,e}, A_{g,e}, Q$ are fixed parameters. In particular $A_{g,e}$ determines the strength of the Hamiltonian interaction between electronic and nuclear degrees of freedom. $B$ is the intensity of the static magnetic field along the $z$-axis.
We describe the non-Hamiltonian components of the evolution with Lindblad terms consisting of jump-type operators and associated pumping and decay rates. The relevant transitions are represented by the operators below; all of them act trivially on the nuclear degrees of freedom:
\begin{align*}
    L_{1}&\!=\!\sqrt{\gamma_d}\;\ket{g,0}\bra{e,0}\otimes \ones_N, 
 & L_{2}&\!=\! \sqrt{\gamma_d}\;\ket{g,1}\bra{e,1}\otimes \ones_N,\\
L_{3}&\!=\!\sqrt{\gamma_p} \; \ket{e,0}\bra{g,0}\otimes \ones_N, & L_{4}&\!=\!\sqrt{\gamma_p} \; \ket{e,1}\bra{g,1}\otimes \ones_N.
\end{align*}
The first two operators describe decays with rates $\gamma_d,\gamma_m$, the last two operators accounts for the optical-pumping action on the electron with rate $\gamma_p$.
 Typical values of the parameters of the dynamics for NV-centers that will be employed in the following of the section are: $D_e=1420$ MHz, $D_g=2870$MHz,
    $Q=4.945$MHz, $A_e=40$ MHz, $A_g=2.2$MHz and $g_{el}=2.8$MHz/G,
    $g_n=3.08\times10^{-4}$MHz/G. The values considered for the decay rate and the optical-pumping rate are, respectively,  $\gamma_{d}=77
    \textrm{MHz}$ and $\gamma_{p}=70 \textrm{MHz}.$

{Motivated by the typical experimental setups,} we further assume to be able to perform \textit{only} measurements on the electronic degrees of freedom. In particular, we consider to measure the electron spin along the $z$-axis, i.e. $\Xc=\{\ones \otimes \sigma_z \otimes \ones, \ones \}$\footnote{We recall the identity operator is always considered in the measurement operator set $\Xc$ as described in section \ref{sec:notions_and_assumptions}}. 

We would like to assess whether DQST is feasible for the considered system, to do so, as discussed in section \ref{sec:dqst_feas}, it is possible to employ observability analysis. As for Example 1, the software \texttt{Matlab} was employed to compute the matrices $L$ and $X$ associated, respectively, with the generator of the dynamics and output maps, as defined in Section \ref{subsec:ct_dyn_obs}. Subsequently, the continuous-time observability matrix \eqref{eq:ct_obs_m} was constructed and the Kalman rank condition (Corollary \ref{cor:ct_obs}) was applied to determine whether the system is observable. 

With the latter procedure, we verified that the described system is \textit{not} observable, in particular, the observable subspace has dimension $8<d^2=64.$
Therefore, by Proposition \ref{prop:obs_DQST} DQST is not possible and the (initial) state of the system can not be uniquely reconstructed.

Motivated by the results presented in Section \ref{sec:obs_exp_reconstr}, we investigate whether it is at least feasible to reconstruct the expectation values of certain observables of interest that are \textit{not} directly measurable on the system.

We aim to reconstruct the expectation of $\sigma_z$ on the nuclei, so the considered target set is set to $\Zc=\{Z\}=\{\ones \otimes \sigma_z\}$. Notice that $\{ \Xc\cap\Zc \}=\emptyset$, which implies that the target observable cannot be measured directly on the system. By Proposition \ref{prop:feas_obs_exp_reconstr} the expectation of target observables can be reconstructed if and only if $\Zc\subseteq\Oc$. Let $s_z=\vect(Z)$, the latter condition can be checked by verifying ${\rm rank}[O_c^\dag | \, s_z]^\dag=\rank[O_c] $ where $[O_c^\dag | \, s_z]$ is the matrix obtained by stacking the column vector $s_z$ on the right of $O_c^\dag$.
We numerically verified that for the considered system, $\Zc\subseteq\Oc$,  therefore by performing a local measurement on the electron it is possible to retrieve information on the nuclei even if the state of the system cannot be reconstructed. 
In particular
\begin{equation*}
    Z\simeq\alpha_1 X[t_1]+\alpha_2 X[t_2] 
\end{equation*}
where $X=\ones \otimes \sigma_z \otimes \ones$ is the unique element of $\Xc$, $t_1=0$ and $t_2=50$, moreover $\alpha_1=2.0057$ and $\alpha_2=-1.0057$. We let $\hat{y}_1, \hat{y}_2$ be the estimates of the expectation values of  $X[t_1], X[t_2]$, we remark each of them is obtained by averaging over measurement outcomes on $N$ instances of the system. 
Then, as described in section \ref{sec:obs_exp_reconstr}, the estimate $\hat{z}$ of the expectation value $z$ of the target observable is computed as
\begin{align}\label{eq:hat_z}
    \hat{z}=\alpha_1 \hat{y}_1+\alpha_2 \hat{y}_2.
\end{align}
To test the effectiveness of the proposed method to estimate expectation of unknown observables, we consider three different initial states:
\begin{itemize}
    \item A pure separable state  $\rho_S=(\ones+\ones\otimes\ones\otimes \sigma_z)/d$. The nuclei is in the maximally mixed state and do not exhibit correlations with the target observable $Z$. The electron has an immediately recognizable correlation with $Z$;
    \item A pure entangled state (GHZ state) $\rho_{GHZ}=\ket{\Psi}\bra{\Psi}$ with $\ket{\Psi}=(\ket{0 0 0}+\ket{111})/\sqrt{2}$;
    \item A thermal Gibbs state $\rho_{Gibbs}=e^{-\beta{H}}/\tr(e^{-\beta H})$ where $H$ is the Hamiltonian of the system defined in equation \eqref{eq:nv_hamiltonian} and we chose $\beta=1$.
\end{itemize}
We then generated the expectation of the observables $X[t_1]$ and $X[t_2]$ with the same method employed in Example 1 and  computed $\hat{z}$ as described in \eqref{eq:hat_z}. We repeated the test for different values of $N$ and computed the squared error $\varepsilon_z^2=(z-\hat{z})^2,$ which quantifies the accuracy of the estimate. The results of the experiment are depicted in Figure \ref{fig:sq_error_scaling_state},where we see how  $\varepsilon^2_z$ decreases linearly with the number $N$ of collected measurement outcomes.

\begin{figure}
    \centering
\includegraphics[width=\linewidth]{./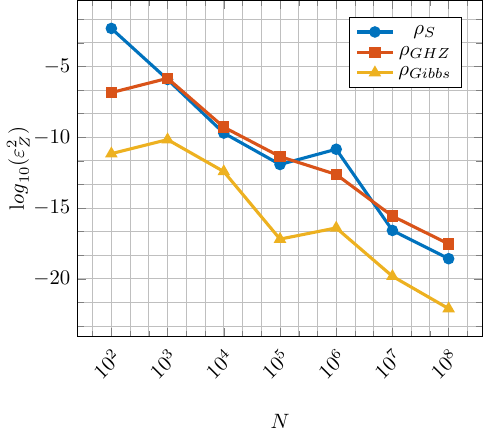}
\caption{Squared error ($\varepsilon_z^2$) scaling in function of the number $N$ of instances of the system involved in computing the estimate $\hat{z}$ of the target observable $Z$. In computing the expectations we considered three states $\rho_S, \rho_{GKS}$ and $\rho_{Gibbs}$ as described in the main text {\cite{data_av}}.}
    \label{fig:sq_error_scaling}
\end{figure}

\section{Conclusions}

In this work we lay the foundation for a systematic exploration of {\em dynamical tomography,} that is, a general method that allows for reconstructing states and expectations for a given quantum systems when limited observables are available, but we can exploit a precise knowledge of its dynamics. Leveraging control-theoretic observability analysis, we characterize systems for which any unknown state can be reconstructed from repeated experiments in the presence of general CPTP dynamics, and provide linear algebraic tests to check for such property for Markovian ones. In our framework, we also recall and extended existing results that show how unitarily evolving systems do not allow for full state reconstruction from a single observable, while open dynamics can. In presence of parametric dynamics, we show that if the system is observable for a choice of parameters, it must be so for almost all of them, allowing for randomized trials to test the feasibility of DQST. When the full state is not available, we characterize the subset of observable whose expectations can still be reconstructed from data. 
{A procedure (AOT) to select measurements and evolution times in order to maximize the amount of new information and improve the quality of the estimate is provided, which shows to significantly improve the quality of the estimate with respect to worst-case and randomized choices in the simulated scenarios.} The framework and the proposed methods are showcased in two physically-motivated examples: a chain of interacting qubits (spin 1/2), and a coupled electron-nucleus system that may serve as a simplified model for NV centers in diamonds.
In order to further develop the analysis of dynamical tomography, some additional opportunity for investigations might address, among others:
(i) a detailed error analysis and study of robustness with respect to uncertainty in the dynamical description; (ii) the development of improved algorithms for observable selection, in particular for locality constrained systems; and (iii) a in-depth comparison and connection of our approach with shadow tomography methods \cite{aaronson2018shadow}. 

\begin{acknowledgments}
F.T.acknowledges funding from the European Union - NextGenerationEU, within the National Center for HPC, Big Data and Quantum Computing (Project No.\,CN00000013, CN 1, Spoke 10) and European Union’s Horizon Europe research and innovation programme under the project “Quantum Secure Networks Partnership” (QSNP, grant agreement No 101114043). F.T and T.G. are partially supported by the Italian Ministry of University and Research under the PRIN project ``Extracting essential information and dynamics from complex networks”, grant No.\,2022MBC2EZ.
\end{acknowledgments}
\appendix
\section{Tools and Technical Results}

\subsection{Expectations from sampled averages}\label{sec:empav}
The expectation values $y_i$ are not typically accessible and thus need to be inferred form the measurement outcomes. To estimate the values of $y_i$ one has to first perform the measurement of $X_i$, collect and record the outcome, reset the experiment, and repeat this procedure multiple times, say $N$ times. Let $\hat{\alpha}_i[k]$ denote the outcome collected in the $k$-th iteration of the experiment, then the sample average estimate is give by $\hat{y}_i=\frac{1}{P} \sum_k \hat{\alpha}_i[k].$ 

Because in practice we only have access to a finite number of measurement outcomes, the estimates of the expectation values will inevitably include some error. In particular, $\hat{y}_i$ can be interpreted as the sample average of $N$ independent and identically distributed random variables associated with repeated measurements of the observable $X_i$ when the system is prepared in the same identical conditions. Each of these variables has mean $y_i$ and variance $\sigma_i^2=\tr[X_i^2\rho]-\tr[X_i\rho]^2$. By the central limit theorem we have that for $N\rightarrow\infty$ ${\hat{y}}_i$ converges in distribution to a normal with mean $y_i$ and variance $\sigma_i^2/N$.
To evaluate the quality of the described estimate it is possible to compute the Mean Squared Error (MSE) between $y_i$ and its estimate $\hat{y_i}$ which is defined as
\begin{equation}
    \mathbb{E}((y_i-\hat{y}_i)^2)=\frac{\sigma^2_i}{N}.
\end{equation}
Clearly the MSE converges linearly to 0 as $N\rightarrow\infty$ which implies $\hat{y}_i$ converges to $y_i$ almost surely.

\subsection{Control Theoretic Tools:\\ PBH and Kalman test for observability}
\label{sec:PBH}
This section quickly reviews observability analysis for linear systems, and the key tools we borrow in our work. A complete presentation can be found in e.g. \cite{marroControlledConditionedInvariants1994, wonham}.
Let us start by considering the continuous-time linear time-invariant model
\begin{equation}\label{eq:lc}
    \begin{cases}
        \dot{x}(t)=A x(t)\\
        y(t)=C x(t)
    \end{cases}\quad x_0\in\Rb^n
\end{equation}
with state $x\in\Rb^n$, output $y\in\Rb^m$, state-matrix $A\in\Rb^{n\times n}$ and output-matrix $C\in\Rc^{m\times n}$. 
Let us define the {\em non-observable subspace} as:
\[\Nc \equiv \{x\in\Rb^n |\, C e^{A t}[x_0] =0\, \forall t\geq0\}.\]
By linearity, one can note that two initial conditions $x_1,x_2\in\Rb^n$ are \textit{indistinguishable from the output}, i.e. $Ce^{At}x_1 = C e^{At} x_2$ for all $t\geq 0$, if their difference belongs to the non-observable subspace, i.e. $x_1-x_2\in\Nc$.

Similarly, if we were to consider a discrete-time linear time-invariant model 
\begin{equation}\label{eq:ld}
    \begin{cases}
        x(t+1)=A x(t)\\
        y(t)=C x(t)
    \end{cases}\quad x_0\in\Rb^n
\end{equation}
we could define the {\em non-observable subspace} as:
\[\Nc \equiv \{x\in\Rb^n |\, C A^{t}[x_0] =0\, \forall t\geq0\}.\]
Then, as in the continuous-time case we would have that two initial conditions $x_1,x_2\in\Rb^n$ are \textit{indistinguishable from the output}, i.e. $CA^{t}x_1 = C A^{t} x_2$ for all $t\geq 0$, if their difference belongs to the non-observable subspace, i.e. $x_1-x_2\in\Nc$.

In both the continuous- and discrete-time cases, 
the subspace $\mathcal{N}$ can be characterized as
\begin{equation*}
    \mathcal{N} := \ker\begin{bmatrix} C\\CA\\\vdots\\CA^{n-1}
\end{bmatrix}.
\label{eqn:non_observable_space_definition}
\end{equation*}
or, equivalently, as the largest $A$-invariant subspace contained in $\ker C$. Whenever $\Nc=\{0\}$, i.e. any two initial conditions are distinguishable from the output, we say that the linear model is \textit{observable}. An equivalent characterization is the following:

\begin{proposition}[Kalman rank condition]\label{prop:krc}
A linear model of the form \eqref{eq:lc} or \eqref{eq:ld} is \textit{observable} if and only if the matrix $\Oc \equiv \begin{bmatrix} C^T&C^TA^T&\dots&C^TA^{T n-1}
\end{bmatrix}$ has full rank, i.e. $\rank\Oc = n$.
\end{proposition}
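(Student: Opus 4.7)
The plan is to characterize $\Nc$ explicitly as the kernel of a finite stack of matrices and then relate this to $\rank\Oc$. The key ingredient will be the Cayley--Hamilton theorem, which will let me truncate the infinite family of constraints defining $\Nc$ at index $n-1$.

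First, I would stack the constraints $CA^{t} x = 0$ for $t = 0, 1, \dots, n-1$ into a single linear equation $\Oc^\top x = 0$, so that $\ker \Oc^\top = \{x : CA^{t} x = 0, \ t = 0, \dots, n-1\}$. For the discrete-time system \eqref{eq:ld} I would then apply Cayley--Hamilton to $A$: the characteristic polynomial annihilates $A$, so $A^n$, and inductively every $A^k$ with $k\geq n$, is a linear combination of $I, A, \dots, A^{n-1}$. Consequently, vanishing of $CA^{t}x$ for $t\leq n-1$ propagates automatically to all $t\geq 0$, yielding $\Nc = \ker \Oc^\top$.

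For the continuous-time system \eqref{eq:lc} I would instead invoke the analyticity of $t\mapsto C e^{At} x$: this vector-valued function vanishes for all $t\geq 0$ if and only if all its derivatives at $t=0$ vanish, i.e.\ if and only if $CA^{t} x = 0$ for every $t\in\mathbb{N}$, as one sees from the Taylor expansion of the matrix exponential. Reducing this condition to $t\leq n-1$ by the same Cayley--Hamilton argument as above again gives $\Nc = \ker \Oc^\top$.

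Finally, since $\rank\Oc = \rank\Oc^\top$, the observability requirement $\Nc=\{0\}$ is equivalent to $\ker\Oc^\top=\{0\}$, which is in turn equivalent to $\rank\Oc = n$. The only genuinely delicate step is the application of Cayley--Hamilton, which is what turns the a priori infinite family of time indices into a finite one of size $n$; the analyticity argument in the continuous case is standard and poses no real obstacle. A minor bookkeeping check is required to verify that the matrix $\Oc$ displayed in the statement, whose columns are $C^\top, C^\top A^\top, \dots, (A^\top)^{n-1} C^\top$, is indeed the transpose of the row-stacked matrix used in the kernel characterization, so that the rank conditions match.
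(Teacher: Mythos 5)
Your proof is correct. The paper itself does not prove this proposition: it is recalled as a classical fact from the control-theory literature (the appendix merely states the kernel characterization $\mathcal{N}=\ker\begin{bmatrix}C^\top & A^\top C^\top&\cdots&(A^\top)^{n-1}C^\top\end{bmatrix}^\top$ and cites standard references), so there is no in-paper argument to compare against. Your route --- truncating the infinite family of conditions via Cayley--Hamilton for the discrete-time case, and reducing the continuous-time case to the discrete family of conditions $CA^kx=0$ through analyticity of $t\mapsto Ce^{At}x$ (whose $k$-th derivative at $t=0$ is $CA^kx$), then identifying $\mathcal{N}$ with $\ker\mathcal{O}^\top$ --- is exactly the standard textbook proof and is complete, including the transposition bookkeeping at the end.
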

 Another common method to assess whether or not a given linear time-invariant model is or is not observable is known as the Popov-Belevitch-Hautus (PBH) test, summarized in the next proposition. 
\begin{proposition}[PBH criterion]\label{prop:pbh}
    Let $A\in\Rb^{n\times n}$ and $C\in\Rb^{m\times n}$ be the state- and output-matrix of a linear time-invariant model (either continuous- or discrete-time). Then the model is observable if and only if $$\rank \begin{bmatrix}
        A^T-\lambda I_n &C^T
    \end{bmatrix} = n $$ for all $\lambda\in\Cb.$
\end{proposition}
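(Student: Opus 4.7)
The plan is to prove the biconditional by contrapositive in both directions, reducing everything to the Kalman rank condition (Proposition \ref{prop:krc}) via an eigenvector reformulation of the PBH rank test.

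The first step is a bookkeeping observation. The block matrix $M(\lambda) := \begin{bmatrix} A^T - \lambda I_n & C^T \end{bmatrix}$ has $n$ rows, so its rank falls short of $n$ if and only if there is a nonzero left-null-vector $w\in\Cb^n$; splitting this condition across the two blocks yields $(A-\lambda I)w = 0$ and $Cw = 0$. Thus the PBH condition is exactly the statement that no right eigenvector of $A$ (over $\Cb$) lies in $\ker C$.

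For the ``only if'' direction I would assume the rank condition fails at some $\lambda\in\Cb$ with witness $w\neq 0$. Then $CA^k w = \lambda^k Cw = 0$ for every $k\geq 0$, so $w$ lies in the complexified kernel of the stacked observability matrix $[C^T,(CA)^T,\ldots,(CA^{n-1})^T]^T$. Extracting real and imaginary parts of $w$ shows the real kernel of this matrix is also nontrivial, so by Proposition \ref{prop:krc} the system is not observable. For the ``if'' direction I would argue contrapositively: if the system is not observable, then the non-observable subspace $\Nc\subseteq\Rb^n$ is nontrivial, $A$-invariant, and contained in $\ker C$. Complexifying, $A|_{\Nc\otimes_\Rb\Cb}$ is an endomorphism of a nontrivial finite-dimensional complex vector space, hence admits an eigenvector $w$ with some $\lambda\in\Cb$. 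This $w$ satisfies both $Aw=\lambda w$ and $Cw=0$ (since $\Nc\otimes_\Rb\Cb\subseteq(\ker C)\otimes_\Rb\Cb$), producing the required rank deficiency of $M(\lambda)$.

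The main obstacle is purely bookkeeping: the PBH test quantifies over all $\lambda\in\Cb$, while $A$, $C$ are real and the non-observable subspace is a real subspace. The fix is the standard complexification trick combined with the fact that a real matrix has the same rank over $\Rb$ and over $\Cb$, which reduces the proof to the eigenvector equivalence identified in the first step. Once this is in place, both implications are one-line consequences of the Kalman rank condition.
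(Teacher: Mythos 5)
The paper does not actually prove Proposition \ref{prop:pbh}: it is stated in the appendix as a recalled classical result, with the reader referred to standard references (Wonham, Marro) for a complete presentation. So there is no in-paper argument to compare against; your proof has to be judged on its own, and it is correct. The reduction of the rank condition to the eigenvector statement is right: $\rank\begin{bmatrix} A^T-\lambda I_n & C^T\end{bmatrix}<n$ holds iff some nonzero $w\in\Cb^n$ annihilates both blocks from the left, which transposes to $Aw=\lambda w$ and $Cw=0$. The forward direction ($CA^kw=\lambda^k Cw=0$, hence $w$ lies in the complex kernel of the stacked observability matrix, hence its real or imaginary part gives a nonzero real kernel vector) and the converse (complexify the nontrivial $A$-invariant non-observable subspace contained in $\ker C$ and extract an eigenvector) are both standard and complete; you correctly handle the only genuinely delicate point, namely that $A$ and $C$ are real while $\lambda$ ranges over $\Cb$, by invoking the invariance of rank under field extension. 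The one dependency worth flagging is that your converse uses the characterization of $\Nc$ as the largest $A$-invariant subspace contained in $\ker C$; the paper asserts this equivalence without proof in the same appendix, so you are resting on a stated-but-unproved fact of the paper rather than something you establish yourself, which is acceptable given that Proposition \ref{prop:krc} is likewise taken as given.
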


\noindent Note that the criterion is automatically satisfied for all $\lambda\in\Cb$ that do not belong to the spectrum of $A$ and thus it is sufficient to check the criterion for all  $\lambda$ which belong to the spectrum of $A$.

\subsection{Reconstructing N-qubit states from a single observable and purely dissipative dynamics}\label{sec:nqubitDQST}
In this appendix, we provide a direct proof to the fact that DQST is possible with a single observable in $\Xc$ (in addition to the identity) in open quantum qubit systems.  As outlined in section \ref{sec:DQST}, for Continuous Time dynamics, the evolution of 
measurement operators is described by a continuous semigroup of CP unital maps. Let $\{F_i\}$ be an orthogonal basis for $\Bc(\Hc)$ such that $F_0=\frac{I}{\sqrt{d}}$, the semigroup generator can be expressed in the Gorini-Kossakowski-Sudarshan (GKS) form as  
\begin{equation}\label{eq:GKLS}
 \Lc(\cdot)=i [H,\cdot] +\sum_{j,i=1}^{d^2-1} a_{ji} \left( F_{i}^\dag \cdot F_{j}-\frac{1}{2}\left\{F_{j}^\dag F_{i},\cdot\right\}\right),
\end{equation}
where $H=H^\dag$ is the Hamiltonian of the system and $A=[a_{ij}]\in \mathbb{C}^{(d^2-1)\times (d^2-1) }$  is an hermitian positive semi-definite matrix (Gorini-Kosskowski-Sudarshan matrix \cite{gks1979}).\footnote{
The Lindblad canonical form of the generator in equation \ref{eq:lindblad} can be found by unitarily diagonalizing the matrix 
$A = V D V^{\dagger}$, with $D = \text{diag}(\lambda_1, \cdots, \lambda_{d^2})$, and defining $L_j = \sum_m \sqrt{\lambda_m} V_{mj} F_m$}.

A measurement operator $X_i\in \Xc$ can be expressed in the basis $\{F_k\}$ as
\begin{equation}\label{eq:vect}
    X_i=\sum_{k=0}^{d^2-1}x_{i,k}F_k,
\end{equation}
where the coefficients $x_{i,k} \in \mathbb{R}$ are given by $x_{i,k}=\tr{[X_i F_k]}$, $k\in \{0,\dots d^2-1\}$. Each operator $X_i$ can therefore be associated to a vector $x_i=[ x_{i,0},x_{i,1}  \dots ]\in \mathbb{R}^{d^2}$. \footnote{This vector representation of operators would coincide with the one considered in Sec. \ref{sec:dqst_feas} if  $\vv{F}_{(j-1)d + i-1}=\ket{i}\bra{j} \ \forall \ i,j\in \{1,\dots,d\}$, where $\ket{s}$ is the s-th vector of the canonical basis in $\mathbb{R}^{d^2}$.}

Let $x_i[0]=x_i$, the evolution of measurement operators in the considered vector representation is described by the differential equation
\begin{equation*}
    \dot{x}_{i}[t]=\Psi \,x_i[t].
\end{equation*}
The matrix $\Psi$ is given by $\Psi_{mn}=\tr[F_m \mathcal{L}(F_n)]$ and  it is possible to prove it has the following block structure
\begin{equation*}
    \Psi=\begin{bmatrix}
        0 & b^\dag\\
        0 & G^\dag
    \end{bmatrix},
\end{equation*}
where $b\in \mathbb{R}^{d^2-1}$ and $G\in\mathbb{C}^{d^2-1\times d^2-1}$.
\begin{assumption}\label{as:pauli_basis}

In the rest of the section we make the following assumptions:
\begin{enumerate}
    \item[A.1)] The dimension of the Hilbert space $\Hc$ is $d=2^N$ for some $N$.
    \item[A.2)] We choose the {\em pauli} basis $\{F_i\}$ for $\Bc(\Hc)$, namely $F_i$=$\sigma_1^i\otimes\dots\otimes \sigma_n^i$, $\forall i\in\{1,\dots d^2-1\}$ and $F_0=\frac{I}{\sqrt{d}}$. Here the $\sigma_k^i \ \forall k$ belong to the set of $2\times 2$ Pauli matrices $\{\sigma_0, \sigma_1, \sigma_2, \sigma_3\}$ with $\sigma_0=I$.
\end{enumerate}
\end{assumption}
We prove the following facts on the relationship between the structure of the semigroup generator and the matrix $\Psi$ under assumption \ref{as:pauli_basis}.

\begin{proposition}
\label{prop:diagonal_G}
    Under Assumption \ref{as:pauli_basis}, if  the GKS matrix $A$ is diagonal and the Hamiltonian $H=0$ then $\Psi$ is a diagonal matrix.
\end{proposition}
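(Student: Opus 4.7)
The plan is to exploit two algebraic features of the tensor Pauli basis: (i) each $F_i$ for $i\geq 1$ is Hermitian with $F_i^2 = I$, and (ii) any two nontrivial tensor Pauli products either commute or anticommute, that is, $F_i F_n = \epsilon_{in} F_n F_i$ with $\epsilon_{in}\in\{+1,-1\}$. This dichotomy, together with diagonality of $A$ and $H=0$, will collapse each anti-diagonal contribution to $\mathcal{L}(F_n)$ into a scalar multiple of $F_n$.

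First I would write the generator explicitly under the hypotheses: with $H=0$ and $a_{ji}=a_{ii}\delta_{ji}$, equation \eqref{eq:GKLS} reduces to
\[
\mathcal{L}(F_n) = \sum_{i=1}^{d^2-1} a_{ii}\Bigl(F_i F_n F_i - \tfrac{1}{2}\{F_i^2, F_n\}\Bigr) = \sum_{i=1}^{d^2-1} a_{ii}\bigl(F_i F_n F_i - F_n\bigr),
\]
using $F_i^\dagger=F_i$ and $F_i^2 = I$. Next I would split the sum according to the commutation dichotomy: if $[F_i,F_n]=0$ then $F_i F_n F_i = F_n F_i^2 = F_n$ and the $i$-th term vanishes; if $\{F_i,F_n\}=0$ then $F_i F_n F_i = -F_n F_i^2 = -F_n$ and the $i$-th term contributes $-2 a_{ii} F_n$. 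Defining $c_n := \sum_{i:\{F_i,F_n\}=0} a_{ii}$, this yields the crucial identity $\mathcal{L}(F_n) = -2 c_n F_n$, so each basis element is an eigen-operator of $\mathcal{L}$.

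The diagonality of $\Psi$ then follows in three short checks. For $m,n\geq 1$, $\Psi_{mn}=\tr[F_m\mathcal{L}(F_n)] = -2 c_n\tr[F_m F_n] = -2 c_n d\,\delta_{mn}$ by orthogonality of the tensor Pauli basis. For $m=0,n\geq 1$, $\Psi_{0n} = \tfrac{1}{\sqrt{d}}\tr[\mathcal{L}(F_n)] = -\tfrac{2c_n}{\sqrt{d}}\tr[F_n]=0$ since nontrivial Pauli products are traceless; this shows $b=0$ in the block decomposition. Finally, $\mathcal{L}(F_0)=\mathcal{L}(I/\sqrt{d})=0$ because the Heisenberg-picture generator is unital, which recovers the vanishing first column already stated.

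I do not anticipate a serious obstacle: the computation is essentially a one-line application of the Pauli commutation dichotomy, and the remainder is bookkeeping using orthogonality and tracelessness. The one point that deserves care is making explicit why the commutation dichotomy holds for arbitrary tensor products in the chosen basis; I would justify it briefly by noting that each factor $\sigma^{i}_k$ either commutes or anticommutes with $\sigma^{j}_k$, so $F_i F_n$ and $F_n F_i$ differ by the product of signs $(-1)^{s}$, where $s$ counts the sites at which the two strings anticommute. Everything else reduces to elementary trace identities, so the proof will be compact.
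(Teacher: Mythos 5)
Your proof is correct and rests on the same key fact as the paper's: conjugation of one Pauli string by another returns $\pm$ the same string, so with $H=0$ and $A$ diagonal each $F_n$ is an eigenoperator of $\mathcal{L}$ (you phrase this via the commute/anticommute dichotomy, the paper equivalently via site-by-site trace factorization $\tr[F_nF_iF_mF_i]=\prod_k\tr[\sigma^n_k\sigma^i_k\sigma^m_k\sigma^i_k]$), after which orthogonality and tracelessness give diagonality of $\Psi$. Your organization---establishing $\mathcal{L}(F_n)=-2c_nF_n$ first and reading off all matrix elements at once---is marginally cleaner but not a different argument.
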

\begin{proof}
    We start the proof by recalling that for Pauli matrices $\{\sigma_1,\sigma_2,\sigma_3\}$  
    $\sigma_i \sigma_j=
        I$  if  $i=j$, and 
       $\sigma_i \sigma_j= i \epsilon_{i j k} \, \sigma_k$  if $ i \neq j$. Here $k$ is the index of the pauli matrix not considered in the l.h.s of the previous equation and $\epsilon_{i j k}$ denote the Levi-Civita symbol.

       The previous properties imply that $F_i^2=\sigma_1^i \sigma_1^i \otimes\dots\otimes \sigma_{N}^i \sigma_{N}^i =I$. Moreover, $\tr[F_i F_j]=\prod_k\tr[\sigma^i_k\sigma^j_k]=0$ if $i\neq j$ since at least one of the product $\sigma^i_k\sigma^j_k$ is equal to a traceless matrix. Finally, notice that  $\forall k \ \tr{[(\sigma^k_i\sigma^k_j)^2]}~=~\tr[(i\epsilon_{i,j,k})^2 I_2]~=~-2$.

        Since $A$ is diagonal and $H=0$
        \begin{equation*}
        b^\dag_n=\!=\!\sum_{i=1}^{d^2-1}a_{ii}\big( {\tr{\bigl[F_i F_m F_i\bigr]}}\!-\! \frac{1}{2}{\tr\bigl[\big(F_i^2 F_m +  F_m F_i^2 \big) \bigr]}\big)=0
        \end{equation*}
        where we used the cyclic property of the trace, morever
\begin{align*}
G^\dag_{nm}\!=\!\sum_{i=1}^{d^2-1}a_{ii}\big( \underbrace{\tr{\bigl[F_n F_i F_m F_i\bigr]}}_{(a)}\!-\! \frac{1}{2}\underbrace{\tr\bigl[F_n \big(F_i^2 F_m +  F_m F_i^2 \big) \bigr]}_{(b)}\big).
\end{align*}
We first analyze the term $(b)$ of the above equation, notice that 
\begin{align*}
    (b)= \begin{cases}
        2 \tr[F_n \, F_n]=2d \ \ \textrm{if} \ \ n=m,\\
        2 \tr[F_n \, F_m]=0 \ \ \textrm{if} \ \ n\neq m.
    \end{cases}
\end{align*}
Now we consider the term $(a)$, then
\begin{align*}
    (a)\!=\!\begin{cases}
              \prod_{k=1}^N \tr[(\sigma^n_k \sigma^i_k)^2]=d(-1)^{p_{n,i}}\ \ &\textrm{if} \ n=m, \\  \prod_{k=1}^N \tr[\sigma^n_k \sigma^i_k \sigma^m_k \sigma^i_k] = 0 \ &\textrm{if} \ n\neq m,
        \end{cases}
\end{align*}
where $p_{n,i}\in \mathbb{N}$ in the first equation counts the number of product $(\sigma_k^n \sigma_k^i)$ such that $\sigma^i_k \neq \sigma^n_k\neq I$ for $k\in \{1,\dots,N\}$.
By putting together the two terms we have that
\begin{equation*}    G^\dag_{nm}=\begin{cases}\sum_i a_{ii} ((-1)^{p_{n,i}}-1)d \ \ \textrm{if} \ \ n=m,\\
    0 \ \ \textrm{if} \ \ n\neq m.
    \end{cases}
\end{equation*}
\end{proof}
The previous proposition implies also that if $A$ is diagonal, $H=0$ and  $X_i\in\Hf_0(\Hc)$, then $X_i[t]\in \Hf_0(\Hc) \ \forall t$. This follows from the fact that $\dot{x}_{i,0}[t]=\dot{\tr[X_i[t]]}=0$ and therefore $x_{i,0}[t]=\tr[X_i[t]]=0\,\forall t$.
\begin{proposition}\label{prop:gen_dif_eig} 
Under assumption \ref{as:pauli_basis}, let the GKS matrix $A$ be hermitian, diagonal and positive semidefinite. Moreover, let the Hamiltonian be $H=0$. Then the eigenvalues of $G$ are generically different from each other. 
\end{proposition}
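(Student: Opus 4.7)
The plan rests on the fact, established in Proposition~\ref{prop:diagonal_G}, that under the present hypotheses ($A$ diagonal, $H=0$, Pauli basis) the matrix $\Psi$ is already diagonal. Hence so is $G^\dagger$, and one reads the eigenvalues of $G$ directly from its diagonal entries:
\[
\lambda_n \;=\; d\sum_{i=1}^{d^2-1} a_{ii}\bigl((-1)^{p_{n,i}}-1\bigr) \;=\; -2d \sum_{i \in S_n} a_{ii},
\]
where $S_n\subseteq\{1,\dots,d^2-1\}$ is the set of indices $i$ such that the non-identity Pauli string $F_i$ anti-commutes with $F_n$; here I use the standard fact that $(-1)^{p_{n,i}}=+1$ iff $F_n$ and $F_i$ commute. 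Each $\lambda_n$ is therefore a real-linear functional in the free parameters $a_{11},\dots,a_{d^2-1,d^2-1}$, and ``generic'' refers to Lebesgue-almost-every choice of these entries in the admissible cone $\{a_{ii}\ge 0\}$.

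The problem then reduces to showing that, for every ordered pair $n\neq m$ with $1\le n,m\le d^2-1$, one has $S_n\neq S_m$. Once this is in hand, each coincidence locus $\{\lambda_n=\lambda_m\}$ is a proper affine hyperplane in $\mathbb{R}^{d^2-1}$, and the finite union over all unordered pairs has zero Lebesgue measure inside the admissible cone. Equivalently, Lemma~\ref{lem:matrix_gen} applies verbatim to the pairwise differences $\lambda_n-\lambda_m$, which are analytic (indeed, linear) functions of the parameters and, under the combinatorial claim just stated, not identically zero.

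The main obstacle is therefore the combinatorial claim: for any two distinct non-identity Pauli strings $F_n\neq F_m$ there must exist a non-identity $F_i$ that anti-commutes with exactly one of them. My approach is to invoke the binary symplectic encoding of the Pauli group: each non-identity string of length $N$ corresponds to a nonzero vector in $\mathbb{F}_2^{2N}$, and two strings anti-commute iff the standard non-degenerate symplectic form $\omega$ between their encodings evaluates to $1$. Distinct non-identity strings encode distinct nonzero vectors $v_n\neq v_m$, hence $v_n-v_m\neq 0$; by non-degeneracy of $\omega$ there exists $w\in\mathbb{F}_2^{2N}\setminus\{0\}$ with $\omega(v_n-v_m,w)=1$, and the non-identity Pauli string encoded by $w$ serves as the desired $F_i$, discriminating the commutation pattern of $F_n$ from that of $F_m$. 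This establishes $S_n\neq S_m$ and, combined with the measure-theoretic argument, proves the proposition.
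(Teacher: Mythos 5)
Your proposal is correct and follows the same overall strategy as the paper's proof: under the hypotheses $G$ is diagonal (by Proposition~\ref{prop:diagonal_G}), its diagonal entries are linear functionals of the diagonal of $A$, the problem reduces to showing that distinct non-identity Pauli strings $F_n\neq F_m$ have distinct anticommutation patterns against the basis, and genericity then follows from a finite union of proper linear loci having zero Lebesgue measure, also after intersecting with the cone $\{a_{ii}\ge 0\}$. The one place where you genuinely diverge is the combinatorial core. The paper establishes that the rows of the matrix $R_{ni}=\tr[(F_nF_i)^2]-d$ are pairwise distinct by testing only against \emph{single-site} Pauli operators $I\otimes\sigma_k^s$: the sub-pattern of commutation with single-site Paulis already pins down each tensor factor $\sigma_k^n$, hence the whole string. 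You instead invoke the binary symplectic encoding of the Pauli group and the non-degeneracy of the symplectic form on $\mathbb{F}_2^{2N}$ to produce a discriminating $F_i$. Both arguments are valid; the paper's is more elementary and constructive (it exhibits an explicit single-site discriminator, which is also what makes the link to local measurements transparent), while yours is shorter, basis-free, and generalizes immediately to qudit Weyl--Heisenberg bases via the symplectic form over $\mathbb{Z}_p$. Your identification $\lambda_n=-2d\sum_{i\in S_n}a_{ii}$ and the observation that $(-1)^{p_{n,i}}=+1$ iff $F_n$ and $F_i$ commute are both consistent with the expressions derived in the paper's proof of Proposition~\ref{prop:diagonal_G}.
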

\begin{proof}
If $A$ is a diagonal matrix and $H=0$, as previously proven, $G$ is diagonal. The function mapping the vector ${a}$ of the diagonal entries of $A$ to the vector $g$ of diagonal entries of $G$ is linear in $a$.
    Let $R$ be the matrix representation of such function, i.e. $g = R\, a$, from the proof of Proposition \ref{prop:diagonal_G} it follows that its entries are given by $R_{ni}=\tr{[(F_n F_i)^2]-d}$. The rows of $R$ are all different from each other as we formally prove below. 
    
    Let $I\otimes \sigma^s_k$ be the operator $F_s=\sigma_1^s\otimes\dots\otimes \sigma_{N}^s\in \{F_i\}$ such that $\sigma_j^s=\sigma_0 \forall  j\neq k$.
    Notice that, by the properties of Pauli matrices, $R_{ns}=\tr[(F_n I\otimes \sigma^s_k)^2]-d=0$ if and only if $\sigma_k^n =\sigma^s_k$, otherwise $R_{ns}=-2d$. Therefore two rows $R_{n}, R_{m}$ of $R$ are equal if and only if $R_{ns}-R_{ms}=\tr[(F_n I\otimes \sigma^s_k)^2]-\tr[(F_m I\otimes \sigma^s_k)^2]=0 \ \forall k \in \{1,\dots, N \},\forall s$. The latter fact is true if and only if $\sigma_k^m=\sigma_k^n \ \forall k \in \{1,\dots, N \}$, i.e. $F_n=F_m$, therefore if and only if  $R_n$ and $R_m$ are the same row of $R$. 
    
    We now prove the eigenvalues of $G$ are generically different if $A$ is a generic diagonal and Hermitian matrix. We will later restrict to the case A is positive semidefinite.
    
    If $A$ is Hermitian ${a}\in\mathbb{R}^{d^{2}-1}$. We fix $n$ and $m$, and let $f({a})=g_n-g_m=(R_n-R_m) {a}$. Since ${a}\in \mathbb{R}^{d^2-1}$ and all the entries of $R$ are real, $f({a}):\mathbb{R}^{d^2-1}\rightarrow \mathbb{R}^{d^2-1}$. Moreover, all the rows of $R$ are different from each other, therefore $f({a})$ is not identically 0. This implies by \cite[Lemma 4] {ticozzi2013steadystate} the set of parameters leading to $f({a})=0$ has 0 Lebesgue measure in $\mathbb{R}^{d^2-1}$.
    The set $\Sc$ of all parameters for which 
    ${g}_n-{g}_m=0$ 
    for some $m,n$ is given by the countable union of the previously mentioned sets of measure 0. This implies $\Sc$ has also measure 0 in $\mathbb{R}^{d^2-1}$ and the eigenvalues of $G$ are generically different from each other.
    
    Notice that if we consider $\mathcal{A}=\{{{a}} \in \mathbb{R}^{(d^2-1)} | {a}_{i}\geq 0 \ \forall i\}$, then  $\Ac$ is a nonempty open subset of $\mathbb{R}^{d^2-1}$ and it has not zero Lebesgue measure in $\mathbb{R}^{d^2-1}$. This implies $\Ac \not\subset \Sc$. Moreover the set $\Sc\cap \Ac\subseteq \Sc$ has zero Lebesgue measure. This implies the set of parameters $\Sc\cap \Ac$ has zero Lebesgue measure on $\Ac$. Therefore the eigenvalues of $G$ are generically different from each other for ${a}\in \Ac$.
\end{proof}

We now exploit Proposition \ref{prop:diagonal_G} and \ref{prop:gen_dif_eig} to prove there exist continuous-time dynamics for which a quantum system is observable given a generic measurement operator $X$ in addition to the identity (i.e. $\Xc=\{I,X\}$).

We recall from Proposition \ref{prop:ct_obs}, a continuous time quantum system is observable (and therefore every state can be reconstructed via DQST) if and only if $\Oc={\rm span}\{\Lc^t(X_i)\,  \forall t\in \mathbb{N}_{\leq d^2-1}, \forall X_i\in \Xc \}=\Bc(\Hc)$. In the considered vector representation, this condition is equivalent to requiring $$ {\rm span}\{ (\Psi)^t x_i ,\ \forall t\in \mathbb{N}_{\leq d^2-1}, \forall x_i\}=\mathbb{R}^{d^2},$$
where $x_i$ is the vecotrization according to \eqref{eq:vect} of the observable $X_i\in\Xc$. 
An alternative criterion for observability is the so-called Popov-Belevitch-Hautus (PBH) test (see Appendix \ref{sec:PBH}). We define the PBH matrix as 
\begin{equation}
    P_\lambda=\begin{bmatrix}
        \lambda I-\Psi^\dag\\
        x_0^\dag\\
        \vdots\\
        x_l^\dag
    \end{bmatrix},
\end{equation}
where $x_0$ is the vectorization of the identity and therefore is a vector whose first entry is equal to 1, all other entries are 0. Moreover $l$ is the cardinality of $\Xc\setminus I$.
For the PBH criterion (see Prop. \ref{prop:pbh}), the system is observable if and only if $\rank P_\lambda=d^2 \, \forall \ \lambda\in\mathbb{C}$.

We are now ready to prove Proposition \ref{prop:quibits_observability} in the main text.
\begin{proof} {\bf [Proposition \ref{prop:quibits_observability}]}
    To prove the claim we can exploit the results of Propositions  \ref{prop:diagonal_G}, Proposition \ref{prop:gen_dif_eig} and the PBH criterion (see Prop. \ref{prop:pbh}). We consider as in proposition \ref{prop:diagonal_G} a system with purely dissipative evolution ($H=0$) and generic diagonal GKS matrix $A=A^\dag \geq 0$ in the basis $\{F_i\}$ of Assumption \ref{as:pauli_basis}. Let $X\setminus
    I =\{X\}$ and $x$ be the vectorization of $X$. The PBH matrix for the considered system is  
    \begin{equation}
    P_\lambda=\begin{bmatrix}
        \lambda &0\\
        0 & \lambda I-G\\
        1& 0\\
        0 & x^\dag_{2:d^2}
    \end{bmatrix}
    \end{equation}
    where $x_{2:d^2}$ is the sub-vector of $x$ of elements in position $i\in [2,d^2]$. 
  The matrix $G$ is diagonal and   by Proposition \ref{prop:diagonal_G} has eigenvalues that are generically different.
     The system is observable if and only if the PBH matrix $P_\lambda$ has rank $d^2 \ \forall \lambda \in \mathbb{C}$. Since $G$ is diagonal and all its eigenvalues are generically different, at most one of its rows is identically 0 $\forall \lambda \in \mathbb{C}$ (except for a zero measure set of parameters) and to guarantee that $P_\lambda$ is full rank $\forall \lambda\in \mathbb{C}$, it is necessary and sufficient to choose ${x_{2:d^2}}$ with all entries different from 0. In particular, if we choose a generic observable $X\in\Hf_0(\Hc),$ its associated vector representation ${x}_{2:d^2}$ will have all entries different from 0 except for a zero measure set of parameters. Therefore the system is observable except for a zero measure set of parameters of the dynamics and measurement operators.
 \end{proof}

 {\section{On the Measurement Selection Algorithm}
 \subsection{Comparison of selected measurements with the optimal sequence}\label{sec:numerical_ev_alg}

  The procedure for selecting the evolved observables proposed in Section \ref{sec:alg} does not in general lead to the optimal solution of the minimization problem \eqref{eq:opt_func}. However, we present next numerical evidence that the procedure helps to significantly reduce the value of the cost functional $$\Jc_\Rc,=\tr{[({O}_\Rc^\dag {O}_\Rc)^{-1}]}$$ with respect to the worst possible one, and obtain relatively comparable values to the optimal solution. 

In the following, we consider discretized systems with dissipative dynamics. In this setting, it is reasonable to assume that observables evaluated after sufficiently long evolution times carry negligible additional information about the initial state. Accordingly, we restrict the search over evolution times to the finite interval $[0,2T]$, where $T$ denotes the slowest decay time of the dynamics, defined as
\[T=\frac{1}{|Re(\lambda_2)|},\]
with $\lambda_2$ the non-zero eigenvalue of the Lindblad operator having the largest real part (the spectral gap). In all examples, we discretize this interval and take
\[
\Tc=\left\{\frac{2T\,k}{15}\;:\;k\in\{0,\dots,15\}\right\}.
\]
With this choice, it is numerically feasible to compute all evolved observables $X[t]$ with $X\in\Xc$ and $t\in\Tc$, yielding a finite pool of cardinality $q=|\Xc|\,|\Tc|$. Since DQST requires at least $d^2$ evolved observables, we can then evaluate the cost functional $J_\Rc$ over all $\binom{q}{d^2}$ possible selections of $d^2$ observables from this pool and determine its minimum.

\bigskip
\noindent\textit{Example 1: single qubit.}
We consider a qubit with Hilbert space $\Hc=\Cb^2$, and a Lindblad generator with Hamiltonian and single noise operator
\[
H=\sigma_x, \qquad L=\sigma_-,
\]
where $\sigma_x$ is the Pauli $x$ operator and $\sigma_-$ is the lowering operator.
We assume the available measurement set
\[
\Xc=\{\ket{0}\bra{0},\;\ket{+}\bra{+}\},
\]
where $\ket{0}=[1\ \ 0]^\dag$ and $\ket{+}=\frac{1}{\sqrt{2}}([1\ \ 0]^\dag+[0\ \ 1]^\dag)$. The identity need not be included since the dynamics is trace preserving.

We used \texttt{Matlab} to build the sampled observability matrix and verify the Kalman rank condition (observability is required for DQST and hence for the applicability of the selection procedure). We then:
(i) ran our heuristic selection to obtain a set $\Rc$ and its cost $J_\Rc^A$,
(ii) computed the global minimum and maximum costs, $J_{\Rc,{\rm min}}^C$ and $J_{\Rc,{\rm max}}^C$, via exhaustive search over all $d^2=4$-subsets from the $q=32$ evolved observables,
and (iii) computed the average randomized cost $J_\Rc^S$ over the retained randomized realizations.

The resulting values are:
\begin{align*}
    J_{\Rc,{\rm min}}^C&=7.6\times 10^{-1}, &
    J_\Rc^A&=1.4\times 10^1,\\
    J_{\Rc,{\rm max}}^C&=1.0\times 10^8, &
    J_{\Rc}^S&=7.2\times 10^4.
\end{align*}

These results confirm that the heuristic is sub-optimal (here $J_\Rc^A$ is about two orders of magnitude above $J_{\Rc,{\rm min}}^C$), but it substantially improves over the worst-case selection and over the randomized benchmark. In particular, the relative gap with respect to the worst-case:
\(
\frac{J_{\Rc}^A- J_{\Rc,{\rm min}}^C}{J_{\Rc,{max}}^C}=1.3\times 10^{-7},
\)
is very small, indicating that the algorithm yields a cost much closer to the best case than to the worst case on this instance. Moreover, $J_\Rc^A\ll J_\Rc^S$, showing that the heuristic improves (on average) over the simple randomized iterative procedure.

\bigskip
\noindent\textit{Example 2: two qubits.}
We consider two qubits, $\Hc=\Hc_1\otimes\Hc_2$ with $\Hc_q=\Cb^2$ and $\Hc\simeq \Cb^4$. The considered Lindblad generator is associated to an Hamiltonian:
\begin{equation}
    H=\sum_{i=1}^{2}\big(\sigma_i^x+\sigma_i^y+\sigma_i^z\big)\;+\;\sigma_1^x\sigma_{2}^x+\sigma_1^z\sigma_{2}^z,
\end{equation}
and noise operators
\begin{align*}
    L_i &= \frac{1}{\sqrt{2}}\sigma_i^+, 
    &L_{2+i} &= \frac{1}{\sqrt{2}}\sigma_i^-,  & i\in\{1,2\}.
\end{align*}
Here $\sigma_j^\alpha$ denotes the Pauli operator $\sigma^\alpha$, $\alpha \in \{0,x,y,z\}$, acting on the $j$-th qubit (e.g.\ $\sigma_1^\alpha=\sigma^\alpha\otimes I$, $\sigma_2^\alpha=I\otimes\sigma^\alpha$), and $\sigma_j^\pm$ are the raising/lowering operators on the $j$-th qubit.

We assume we can measure only the first qubit, with
\[
\Xc=\{\ket{0}\bra{0}\otimes I,\;\ket{+}\bra{+}\otimes I\}.
\]
As in Example~1, we verified observability and computed: the heuristic cost $J_\Rc^A$; and the exhaustive-search minimum/maximum costs $J_{\Rc,{\rm min}}^C$, $J_{\Rc,{\rm max}}^C$ over all $d^2=16$-subsets from the $q=32$ evolved observables that yield a full-rank observability matrix.
The obtained values are:
\begin{align*}
     J_{\Rc,{\rm min}}^C&=7.8\times 10^{5},&
     J_\Rc^A&=3.7\times 10^6,\\ 
     J_{\Rc,{\rm max}}^C&=8.6\times 10^{18},&
     J_\Rc^S&=1.1\times 10^{11}
\end{align*}

The interpretation is analogous to Example~1: the heuristic is sub-optimal (here $J_\Rc^A$ is one order of magnitude above $J_{\Rc,{\rm min}}^C$), but it yields a dramatic improvement compared with the worst-case selection and averaged cost. The relative gap with respect to the worst-case:
\(\frac{J_{\Rc}^A- J_{\Rc,{\rm min}}^C}{J_{\Rc,{max}}^C}=3.4\times 10^{-13},\)
is again very small.

Overall, these results support the claim that the heuristic algorithm---while less demanding than exhaustive search---yields good approximations to the optimal solution in the tested instances.}

{ \subsection{Computational considerations}
\label{sec:computational_considerations}

 The procedure we propose is iterative and has the following peculiarities:
    \begin{enumerate}
        \item It involves $n^2$ steps, where $n=\dim(\Hc)$.
        \item In the first step one observable at random is chosen, this operation has cost $O(1)$
        \item At every step \textit{k}, with $k>1$, it requires to search the observable which is the most orthogonal to the one selected at the previous time step. More formally, let $\Rc_k$ be the set of measurement operators found in steps $1,\dots, k$ and $X^k\in \Xc$ the observable that is chosen in the $k$-th step of the procedure. In the $k$-th step  we select the observable $X^k=X_i^*[t^*]$, with $X_i^*\in \Xc$, $t^*\in \Tc$, with maximum projection on $\textrm{span}\{\Rc_{k-1}\}^\perp$.
        
        Let $\Pi_{\Rc^\perp}$ be the orthogonal projector onto $\textrm{span}\{\Rc_{k-1}\}^\perp$, for all $X_i\in \Xc$ we first find 
\begin{align}
    t^*_i=\underset{t\in\Tc}{\textrm{argmax}} \norm{\Pi_{\Rc^\perp} X_i[t]}^2_{HS},
\end{align}
where $\norm{\cdot}_{HS}$ is the Hilbert-Schmidt norm. 

The cost of this operation is dominated by the computational cost of the algorithm chosen to perform the optimization of the functional: call this cost $A$. In practice the vectorized representation of the system can be used, therefore the evaluation of the functional requires only the multiplication of matrices (in $\Cb^{n^2\times n^2}$) and vectors (in $\Cb^{n^2}$) and matrix exponentiation (to compute the evolved observable ant time $t$). These are all operations polynomial in $n^2$ and at most O($n^6$). We emphasize that the optimization is performed for all the $|\Xc|$ observables in $\Xc$.

Successively we set
\begin{align}
X^k=\underset{X_i[t_i^*]}{\textrm{argmax}} \norm{\Pi_{\Rc^\perp} X_i[t_i^*]}^2_{HS}, 
\end{align}
and $\Rc_k=\Rc_{k-1}\cup X^k$.
This optimization requires selecting among the $|\Xc|$ observables output of the previous time step the one that minimizes the cost functional. The value of the cost functional for each of these observables has already been computed in the previous optimization step and can be stored in memory. Since no other evaluation of the functional is needed the computational cost of the operation is $O(|\Xc|)$.
\end{enumerate}

Each of the $k$ step has therefore cost O($|\Xc|A n^6$) the overall cost of the procedure is O($|\Xc|A n^8$).
}

\bibliography{bibliography}

\end{document}